\newtheorem{proposition}{Proposition}
\newtheorem{theorem}[proposition]{Theorem}
\newtheorem{lemma}[proposition]{Lemma}
\newtheorem{corollary}[proposition]{Corollary}
\theoremstyle{remark}
\newtheorem{remark}[proposition]{Remark}
\theoremstyle{definition}
\theoremstyle{plain}
\newtheorem*{remark*}{Remark} 
\numberwithin{equation}{section}
\numberwithin{proposition}{section}
\numberwithin{figure}{section}
\numberwithin{table}{section}
\newcommand{\N}{\mathbb{N}}
\newcommand{\R}{\mathbb{R}}
\newcommand{\E}{\mathbb{E}}
\renewcommand{\S}{\mathbf{S}}
\newcommand{\soc}{{\mathsf{soc}}}
\newcommand{\eps}{\varepsilon}
\renewcommand{\le}{\leqslant}
\renewcommand{\ge}{\geqslant}
\renewcommand{\leq}{\leqslant}
\renewcommand{\geq}{\geqslant}
\renewcommand{\subset}{\subseteq}
\renewcommand{\bar}{\overline}
\renewcommand{\tilde}{\widetilde}
\newcommand{\Ll}{\left}
\newcommand{\Rr}{\right}
\renewcommand{\d}{\mathrm{d}}
\newcommand{\D}{D}
\DeclareMathOperator{\supp}{supp}
\newcommand{\la}{\left\langle}
\newcommand{\ra}{\right\rangle}
\newcommand{\cF}{\mathcal{F}}
\newcommand{\sP}{\mathscr{P}}
\newcommand{\cK}{\mathcal{K}}
\newcommand{\bfs}{\mathbf{s}}
\newcommand{\rv}[1]{\textcolor{black}{#1}}
\begin{document}

\author{Hong-Bin Chen}
\address{Institut des Hautes \'Etudes Scientifiques, France}
\email{hbchen@ihes.fr}

\keywords{spin glass, Parisi formula, order parameter}
\subjclass[2020]{82B44, 82D30}

\title{Free energy in spin glass models with conventional order}

\begin{abstract}
Recently, \cite{Baldwin2023} considered spin glass models with additional conventional order parameters characterizing single-replica properties. These parameters are distinct from the standard order parameter, the overlap, used to measure correlations between replicas. A ``min-max'' formula for the free energy was prescribed in~\cite{Baldwin2023}. We rigorously verify this prescription in the setting of vector spin glass models featuring additional deterministic spin interactions. Notably, our results can be viewed as a generalization of the Parisi formula for vector spin glass models in \cite{pan.vec}, where the order parameter for self-overlap is already present.
\end{abstract}

\maketitle

\section{Introduction}\label{s.intro}

It is well understood that the classical Sherrington--Kirkpatrick (SK) model has one order parameter that characterizes the correlations between replicas of the system\rv{, which is called the \textit{overlap} \cite{parisi1983order}}. In more general models, there can be other order parameters. For instance, in the vector spin glass, the self-overlap of a single replica comes into play. We call order parameters that characterize the properties of a single replica the \textit{conventional} order parameter. In models where both types of order exist, \cite{Baldwin2023} recently proposed a ``min-max'' prescription for the variational formula of free energy. We rigorously verify this in the setting of vector spin glass.

More precisely, we consider a Hamiltonian of the form $H_N(\sigma)+G_N(\sigma)$ at size $N$. Here, $H_N(\sigma)$ is the standard spin glass Hamiltonian with Gaussian disorder, and $G_N(\sigma)$ accounts for additional deterministic spin interactions. \rv{Also, $G_N$ can be very general so that $H_N(\sigma)+G_N(\sigma)$ cannot be rewritten as $H_N(\sigma)$ with shifted coupling coefficients.} Let $\pi$ be the overlap parameter and $m$ be the conventional order parameter in this model. We show that the limit of free energy is given by
\begin{align*}
    \lim_{N\to\infty} \frac{1}{N}\E \log \int e^{H_N(\sigma)+G_N(\sigma)} \d \sigma = \rv{\sup_m\inf_\pi \mathcal{P}(\pi,m)}
\end{align*}
for some Parisi-type functional $\mathcal{P}$, verifying \cite[Eq.\ (12)]{Baldwin2023} \rv{(there, the free energy has an additional minus sign; so ``min-max'' there corresponds to ``max-min'' here)}.
\rv{We also clarify that the quantity on the left-hand side is usually called the ``pressure'', which differs from the ``free energy'' by a multiplicative factor of the temperature (absorbed into $H_N(\sigma)$ and kept implicit here). Here, for practical purposes, we call it ``free energy'' by slightly abusing the language.}

Formulas of this form have already appeared in the generalized SK model~\cite{pan05}, the Potts spin glass~\cite{pan.potts}, and general vector spin glass~\cite{pan.vec}. A common feature is that the normalized self-overlap is not constant (in contract, this quantity is constantly equal to $1$ in the SK model). In these models, $m$ characterizes the self-overlap. 

Our result allows for more general $m$. For instance, $m$ can be the parameter for the mean magnetization, moments of spins, the self-overlap, and combinations of these quantities. Regardless of the choice, $m$ has always to be optimized after the overlap~$\pi$.

Our proof is based on interpolating between the free energy with Hamiltonian $H_N(\sigma)+G_N(\sigma)$ and the free energy with Hamiltonian $H_N(\sigma) - \frac{1}{2}\E H_N(\sigma)^2$ along a Hamilton--Jacobi equation. The simplest form of this technique exists for the Curie--Weiss (CW) model (see \cite[Chapter~3]{HJbook}). Variants of this have been used to handle the self-overlap order parameter in \cite{mourrat2020extending,chen2023self}. We will see that the free energy with the latter Hamiltonian is ``pure'' in the sense that the only order parameter is the overlap, the same as the SK model.

The approach in \cite{pan05,pan.potts,pan.vec} is based on considering free energy with spins constrained on a subset to have certain self-overlap. We believe that this method can be modified to prove the results here by considering constraints for more general conventional order parameters. But, to do so, one seemingly has to rework the argument from the very beginning. In comparison, the PDE approach is more modular and easily applicable. Nevertheless, we will present a proof via this alternative approach in a special case, where only minimal modifications of results in~\cite{pan.vec} are needed.

\subsection{Setting}

We work with vector-valued spins distributed independently and identically. Let $D\in\N$ be the dimension of a single spin and let
\begin{enumerate}[label={\rm (H0)}]
    \item \label{i.P_1} $P_1$ be a finite measure supported on the closed unit ball in $\R^\D$.
\end{enumerate}
For $N\in\N$, we denote the spin configuration by $\sigma = (\sigma_{ki})_{1\leq k\leq\D,\, 1\leq i\leq N}$ which is a $\D\times N$ matrix in $\R^{\D\times N}$. We view each column vector $\sigma_i = (\sigma_{ki})_{1\leq k\leq \D}$ in $\sigma$ as an $\R^\D$-valued vector spin. We sample each vector spin independently from $P_1$ and denote the distribution of $\sigma$ by $P_N$. More precisely, we have $\d P_N(\sigma) = \otimes^N_{i=1}\d P_1(\sigma_i)$.

For each $N$, we are given a centered real-valued Gaussian process $(H_N(\sigma))_{\sigma\in\R^{\D\times N}}$ with covariance
\begin{align*}\E H_N(\sigma) H_N(\sigma') = N \xi\Ll(\frac{\sigma\sigma'^\intercal}{N}\Rr),\quad\forall \sigma,\,\sigma'\in \R^{\D\times N}
\end{align*}
for some deterministic function $\xi:\R^{\D\times \D}\to \R$ satisfying conditions~\ref{i.xi_loc_lip}--\ref{i.xi_convex} specified later in Section~\ref{s.proofs}. In particular, we require $\xi$ to be convex over the set of $\D\times\D$ positive semi-definite matrices.

For each $N\in\N$, the standard free energy is
\begin{align}\label{e.F_N=}
    F_N = \frac{1}{N}\E \log \int \exp\Ll(H_N(\sigma)\Rr)\d P_N(\sigma).
\end{align}
We also consider a version of free energy that is free of conventional order parameters. For each $N\in\N$, the free energy with self-overlap correction is
\begin{align}\label{e.F^soc_N=}
    F_N^\soc = \frac{1}{N}\E \log \int \exp\Ll(H_N(\sigma) - \frac{N}{2}\xi\Ll(\frac{\sigma\sigma^\intercal}{N}\Rr) \Rr)\d P_N(\sigma).
\end{align}
We call the term $- \frac{N}{2}\xi\Ll(\frac{\sigma\sigma^\intercal}{N}\Rr)$ the \textit{self-overlap correction}, which is is equal to $-\frac{1}{2}\E H_N(\sigma)^2$ and resembles the drift term in an exponential martingale.
It has been proved in \cite[Corollary~8.3]{HJ_critical_pts} that
\begin{align}\label{e.limF^soc_N=}
    \lim_{N\to\infty}F_N^\soc = \inf_{\pi\in\Pi}\sP(\pi)
\end{align}
where the Parisi-type functional $\sP$ is defined later in~\eqref{e.sP(pi)=}. Here, $\Pi$ is the collection of overlap parameters and there is no conventional order parameter.

Next, we introduce additional spin interactions.
Fix any $d\in \N$ and let
\begin{enumerate}[label={\rm (H5)}]
    \item \label{i.h_G} $h:\R^\D\to \R^d$ be bounded  and measurable, and $G:\R^d\to\R$ be locally Lipschitz.
\end{enumerate}
We do not require $G$ to be convex.
We let $h$ act on each spin and we can view $h(\sigma_i)$ as a new spin distributed according to the pushforward of $P_1$ under $h$. We denote the mean magnetization of these new spins by
\begin{align}\label{e.m=}
    m_N = \frac{1}{N}\sum_{i=1}^N h(\sigma_i).
\end{align}
We consider the free energy\begin{align}\label{e.F^soc,G_N=}
    F_N^{\soc,G} = \frac{1}{N}\E \log \int \exp\Ll(H_N(\sigma) - \frac{N}{2}\xi\Ll(\frac{\sigma\sigma^\intercal}{N}\Rr) + NG\Ll(m_N\Rr)\Rr)\d P_N(\sigma).
\end{align}
The additional term $NG(m_N)$ is of the type in the generalized CW model. The natural order parameter arising from $NG(m_N)$ characterizes the limit of $m_N$.

Removing the correction term, we also consider
\begin{align}\label{e.F^G_N=}
    F_N^G = \frac{1}{N}\E \log \int \exp\Ll(H_N(\sigma)  + NG\Ll(m_N\Rr)\Rr)\d P_N(\sigma).
\end{align}
For this, the self-overlap comes into play and we need to introduce the following. Let $\S^\D$ be the set of $\D\times\D$ real symmetric matrices. For every $a,b\in\S^\D$, we write $a\cdot b= \sum_{ij}a_{ij}b_{ij}$. By fixing an orthogonal basis for $\S^\D$ under this inner product, we can identify $\S^\D$ with $\R^{D(D+1)/2}$ isometrically. Let $\bfs:\R^\D\to \S^\D$ be given by
\begin{align}\label{e.bfs}
    \bfs(\tau) = \tau\tau^\intercal,\quad\forall \tau\in \R^\D,
\end{align}
and notice that the self-overlap is given by $\frac{\sigma\sigma^\intercal}{N} = \frac{1}{N}\sum_{i=1}^N \bfs(\sigma_i)$.

For any $x$ and $y$ in the same Euclidean space (e.g.\ $\R^d$, $\S^\D$), we denote the inner product between them by $x\cdot y$ and we write $|x|=\sqrt{x\cdot x}$.

\subsection{Main results}

\begin{theorem}\label{t.F^soc,G_N}
Under conditions~\ref{i.P_1}--\ref{i.h_G}, the limit of $F^{\soc,G}_N$ in~\eqref{e.F^soc,G_N=} is given by
\begin{align}\label{e.limF^soc,G_N=}
    \lim_{N\to\infty} F^{\soc,G}_N = \sup_{m\in\R^d}\inf_{\pi\in\Pi}\inf_{x\in\R^d}\Ll\{\sP^h(\pi,x)-m\cdot x+G(m)\Rr\}.
\end{align}
\end{theorem}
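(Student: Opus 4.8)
The plan is to handle the deterministic interaction $NG(m_N)$ by a Laplace-type (generalized Curie--Weiss) argument and to reduce everything to the variational formula~\eqref{e.limF^soc_N=}, in the spirit of the Hamilton--Jacobi analysis of the Curie--Weiss model in~\cite[Chapter~3]{HJbook}. First I would introduce, for $x\in\R^d$, the linearly tilted free energy
\[F_N^{\soc,h}(x):=\frac1N\E\log\int\exp\Ll(H_N(\sigma)-\frac N2\xi\Ll(\frac{\sigma\sigma^\intercal}{N}\Rr)+Nx\cdot m_N\Rr)\d P_N(\sigma).\]
Since $h$ is bounded, $\d\tilde P_1^x(\tau):=e^{x\cdot h(\tau)}\d P_1(\tau)$ is again a finite measure supported on the closed unit ball, and $\xi$ is unchanged, so~\eqref{e.limF^soc_N=} applied with $P_1$ replaced by $\tilde P_1^x$ gives $\lim_{N\to\infty}F_N^{\soc,h}(x)=\inf_{\pi\in\Pi}\sP^h(\pi,x)=:\phi(x)$, where $\sP^h(\pi,x)$ is the Parisi-type functional~\eqref{e.sP(pi)=} for this tilted model. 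Two elementary facts will be used throughout: $x\mapsto F_N^{\soc,h}(x)$ is convex (H\"older) and uniformly bounded ($h$ bounded), so $\phi$ is a finite convex, hence continuous, function on $\R^d$; and $m_N$ always lies in the fixed compact convex set $K:=\overline{\mathrm{conv}}\,h(\supp P_1)$.

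The core step is the Laplace asymptotics
\[\lim_{N\to\infty}F_N^{\soc,G}=\sup_{m\in\R^d}\Ll\{G(m)-\phi^*(m)\Rr\},\qquad\phi^*(m):=\sup_{x\in\R^d}\{m\cdot x-\phi(x)\},\]
which is also the Hopf--Lax/Legendre formula for the first-order Hamilton--Jacobi equation carrying the Curie--Weiss term. For the upper bound I would cover $K$ by finitely many $\eps$-balls centred at $m^{(1)},\dots,m^{(L)}$ with $L$ independent of $N$, bound $G(m_N)\le G(m^{(\ell)})+\omega_G(\eps)$ on the $\ell$-th ball ($\omega_G$ the modulus of continuity of $G$ on $K$), and estimate each restricted partition function by exponential Chebyshev after an optimal tilt, $\int_{|m_N-m^{(\ell)}|\le\eps}e^{H_N-\frac N2\xi}\d P_N\le\inf_x e^{N\eps|x|-Nx\cdot m^{(\ell)}}\int e^{H_N-\frac N2\xi+Nx\cdot m_N}\d P_N$; Gaussian concentration of $\frac1N\log(\cdots)$ around its mean then replaces the right-hand side by quantities converging to $\phi(x)$, and sending $N\to\infty$ then $\eps\to0$ yields ``$\le$''. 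For the lower bound, I would fix $x$ at a point of differentiability of $\phi$, set $m_x:=\nabla\phi(x)\in K$, and note that the law of $m_N$ under the tilted Gibbs measure $\propto e^{H_N-\frac N2\xi+Nx\cdot m_N}\d P_N$ concentrates at $m_x$ (its first moment equals $\nabla_x F_N^{\soc,h}(x)$, which tends to $m_x$); restricting the integral defining $F_N^{\soc,G}$ to $\{|m_N-m_x|\le\eps\}$, using continuity of $G$, and undoing the tilt gives $\liminf_N F_N^{\soc,G}\ge G(m_x)-\phi^*(m_x)$ (via $\phi^*(m_x)=x\cdot m_x-\phi(x)$). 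A convex-analysis argument---density of the attainable gradients $\nabla\phi(x)$ in $\mathrm{dom}\,\phi^*$, together with the affinity of $\phi^*$ on faces and the continuity of $G$---then upgrades this to the full supremum over $m$.

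It remains to rewrite the right-hand side. Using the definition of $\phi^*$ and $\phi=\inf_{\pi\in\Pi}\sP^h(\pi,\cdot)$,
\[\sup_{m\in\R^d}\{G(m)-\phi^*(m)\}=\sup_{m\in\R^d}\Ll\{G(m)+\inf_{x\in\R^d}\{\phi(x)-m\cdot x\}\Rr\}=\sup_{m\in\R^d}\inf_{\pi\in\Pi}\inf_{x\in\R^d}\Ll\{\sP^h(\pi,x)-m\cdot x+G(m)\Rr\},\]
where $\inf_x$ and $\inf_\pi$ are freely interchanged; this is~\eqref{e.limF^soc,G_N=}.

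I expect the main obstacle to be the lower bound in the Laplace estimate: the limiting free energy $\phi$ in the external field is not a priori differentiable, so one must argue that the attainable gradients $\nabla\phi(x)$ reach enough of $\mathrm{dom}\,\phi^*$ to recover $\sup_m\{G(m)-\phi^*(m)\}$, and simultaneously control the quenched fluctuations in the tilting and concentration steps (typically via a small generic linear perturbation of $x$). This is exactly where the non-convexity of $G$ is felt, and it is what forces $m$ to be optimised after (outside) the replica order parameter $\pi$.
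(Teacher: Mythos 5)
Your plan is a Laplace/large-deviations route rather than the paper's Hamilton--Jacobi route, and it is closer in spirit to the paper's alternative proof in Section~\ref{s.alt_proof} (constrained free energies). The reduction to the tilted initial data via the change of measure $\d P_1 \rightsquigarrow e^{x\cdot h}\d P_1$ is exactly Lemma~\ref{l.limcF_N(0,x)=}, and your upper bound via an $\eps$-net over $\cK$ plus exponential tilting and Gaussian concentration is essentially the argument used to prove the upper bound in Theorem~\ref{t.constraint_approach}; this part is sound, and the final rewriting in terms of $\sP^h(\pi,x)$ is also correct. One small remark: the worry you flag at the end, that $\phi=\sP^h$ may fail to be differentiable, is a non-issue --- the paper shows (Lemma~\ref{l.sP}) that $\sP^h$ is everywhere continuously differentiable, but this is a lemma you would still need to prove.

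The genuine gap is in the lower bound. You restrict to $\Sigma_\eps(m_x)=\{|m_N-m_x|\le\eps\}$ and assert that the tilted Gibbs measure concentrates at $m_x=\nabla\phi(x)$ ``because its first moment equals $\nabla F_N^{\soc,h}(x)\to m_x$''. Convergence of the mean $\E\la m_N\ra_x\to m_x$ does \emph{not} give concentration $\E\la|m_N-m_x|\ra_x\to 0$: one must separately control the thermal fluctuation $\E\la|m_N-\la m_N\ra_x|\ra_x$ and the disorder fluctuation $\E|\la m_N\ra_x-\E\la m_N\ra_x|$, and this requires its own convexity argument (it is precisely the content of Proposition~\ref{p.cvg_m}, whose two-step proof in Appendix~\ref{s.cvg_m} is not short). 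Moreover, even granting $\E\la\1_{\Sigma_\eps(m_x)}\ra_x\to 1$, the estimate you actually need after ``undoing the tilt'' is the free-energy bound $\frac1N\E\log\la\1_{\Sigma_\eps(m_x)}\ra_x\to 0$, and this does not follow from convergence in probability alone (a random variable in $[0,1]$ with mean tending to $1$ can still have $\frac1N\E\log$ of order $-1$). Closing this requires an extra ingredient such as the $\sqrt N$-Lipschitz Gaussian concentration of both log-partition functions, which you do not invoke. It is telling that the paper's own constrained-free-energy proof does not attempt this: it imports the lower bound from the cavity-method result \cite[Theorem~2]{pan.vec}, while the main PDE proof bypasses the issue entirely by extracting the needed Hessian bound at the near-maximizer $(t_N,x_N)$ of $\cF_N-\phi$ (the inequality \eqref{e.hessian.est}), which is exactly the step your fixed-$x$ argument cannot reproduce. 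So the overall structure is viable, but as written the lower bound relies on two unproven and non-trivial steps.
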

Here, the functional $\sP^h(\pi,x)$ is of the Parisi type, whose explicit expression is in~\eqref{e.sP(pi,x)}. For each $\pi\in\Pi$, we consider the convex conjugate
\begin{align}\label{e.P^*=}
    {\sP^h}^*(\pi, m) = \sup_{x\in\R^d} \Ll\{m\cdot x - \sP^h(\pi,x)\Rr\},\quad\forall m\in \R^d.
\end{align}
In this notation, we can rewrite~\eqref{e.limF^soc,G_N=} as
\begin{align*}
    \lim_{N\to\infty} F^{\soc,G}_N = \sup_{m\in\R^d}\inf_{\pi\in\Pi}\Ll\{-{\sP^h}^*(\pi,m)+G(m)\Rr\}
\end{align*}
which recovers the min-max formula prescribed in \cite{Baldwin2023} (see (12) therein where the free energy is defined to have a minus sign in the front; so ``min-max'' there corresponds to ``max-min'' here).

By incorporating $\bfs$ into $h$ and $\xi$ into $G$, we can remove the self-overlap correction.
\begin{corollary}\label{c.F^G_N}
Under conditions~\ref{i.P_1}--\ref{i.h_G}, the limit of $F^{G}_N$ in~\eqref{e.F^G_N=} is given by
\begin{align}\label{e.limF^G_N=}
    \lim_{N\to\infty} F^G_N = \sup_{(z,m)\in\S^\D\times\R^d}\inf_{\pi\in\Pi}\inf_{x\in \S^\D\times \R^d}\Ll\{\sP^{(\bfs,h)}(\pi,x)-(z,m)\cdot x +\frac{1}{2}\xi(z)+G(m)\Rr\}.
\end{align}
\end{corollary}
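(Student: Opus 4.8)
The plan is to deduce the corollary from Theorem~\ref{t.F^soc,G_N} by a change of variables that absorbs the self-overlap correction into the deterministic interaction term. Concretely, set $\widetilde d = D(D+1)/2 + d$ and define the enlarged single-spin function $\widetilde h : \R^\D \to \S^\D \times \R^d \cong \R^{\widetilde d}$ by $\widetilde h(\tau) = (\bfs(\tau), h(\tau))$, and the enlarged deterministic interaction $\widetilde G : \S^\D \times \R^d \to \R$ by $\widetilde G(z,m) = \tfrac12 \xi(z) + G(m)$. Since $\supp P_1$ lies in the closed unit ball, $\bfs$ is bounded on $\supp P_1$, so $\widetilde h$ is bounded and measurable; and since $\xi$ satisfies conditions~\ref{i.xi_loc_lip}--\ref{i.xi_convex} (in particular it is locally Lipschitz) and $G$ is locally Lipschitz, the sum $\widetilde G$ is locally Lipschitz. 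Thus the pair $(\widetilde h, \widetilde G)$ satisfies hypothesis~\ref{i.h_G} with $d$ replaced by $\widetilde d$, so Theorem~\ref{t.F^soc,G_N} applies to it.

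Next I would check that the free energy produced by this substitution is exactly $F^G_N$. With $\widetilde m_N = \tfrac1N\sum_{i=1}^N \widetilde h(\sigma_i) = \big(\tfrac{\sigma\sigma^\intercal}{N},\, m_N\big)$, the corrected free energy from~\eqref{e.F^soc,G_N=} built from $(\widetilde h,\widetilde G)$ reads
\begin{align*}
    F_N^{\soc,\widetilde G} = \frac1N\E\log\int\exp\Ll(H_N(\sigma) - \frac N2\xi\Ll(\frac{\sigma\sigma^\intercal}{N}\Rr) + N\widetilde G\Ll(\widetilde m_N\Rr)\Rr)\d P_N(\sigma).
\end{align*}
Expanding $\widetilde G(\widetilde m_N) = \tfrac12\xi\big(\tfrac{\sigma\sigma^\intercal}{N}\big) + G(m_N)$, the term $-\tfrac N2\xi(\tfrac{\sigma\sigma^\intercal}{N})$ cancels against $\tfrac N2\xi(\tfrac{\sigma\sigma^\intercal}{N})$ inside the exponential, leaving precisely $F_N^G$ as in~\eqref{e.F^G_N=}. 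Hence $F_N^{\soc,\widetilde G} = F_N^G$ for every $N$, and by Theorem~\ref{t.F^soc,G_N} the limit equals
\begin{align*}
    \lim_{N\to\infty}F_N^G = \sup_{(z,m)\in\S^\D\times\R^d}\ \inf_{\pi\in\Pi}\ \inf_{x\in\S^\D\times\R^d}\Ll\{\sP^{\widetilde h}(\pi,x) - (z,m)\cdot x + \widetilde G(z,m)\Rr\},
\end{align*}
which is exactly~\eqref{e.limF^G_N=} once one writes $\widetilde h = (\bfs,h)$ and $\widetilde G(z,m) = \tfrac12\xi(z)+G(m)$. The only points needing care are purely bookkeeping: that the isometric identification $\S^\D\cong\R^{D(D+1)/2}$ used in the paper makes the pairing $(z,m)\cdot x$ on $\S^\D\times\R^d$ agree with the Euclidean inner product appearing in the statement of Theorem~\ref{t.F^soc,G_N}, and that $\sP^{\widetilde h}$ is literally the functional the paper denotes $\sP^{(\bfs,h)}$. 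The main (and essentially only) obstacle is therefore verifying that the enlarged data genuinely fall under hypothesis~\ref{i.h_G} — specifically the local Lipschitz continuity of $\widetilde G$, which hinges on $\xi$ being locally Lipschitz as assumed in~\ref{i.xi_loc_lip}; given that, the corollary is immediate.
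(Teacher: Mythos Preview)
Your proposal is correct and follows essentially the same approach as the paper: both set $\bar h = (\bfs,h)$ and $\bar G(z,m) = \tfrac12\xi(z)+G(m)$, observe that with this substitution $F^{\soc,\bar G}_N = F^G_N$, and then read off the result from Theorem~\ref{t.F^soc,G_N}. Your version is simply more detailed in verifying that the enlarged data satisfy~\ref{i.h_G}.
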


Here, $\sP^{(\bfs,h)}$ is defined as $\sP^h$ with $h$ substituted with the function $\S^\D\times \R^d\ni(z,m)\mapsto (\bfs(z),h(m))$.
Again, by absorbing $\inf_x$ into a convex conjugate as in~\eqref{e.P^*=}, we recover the min-max prescription in~\cite{Baldwin2023}.
Here, without the self-overlap correction, it is necessary to include the self-overlap into the conventional order parameter. 

When $h$ contains $\mathbf{s}$, one can simplify the formula in~\eqref{e.limF^G_N=} by modifying the proof. A particular case is when $h=\mathbf{s}$ and we have the following.

\begin{corollary}\label{c.F^G_N_so}
Under conditions~\ref{i.P_1}--\ref{i.h_G} and an additional assumption that $h=\bfs$ (identifying $\S^\D$ with $\R^{\D(\D+1)/2}$ isometrically), the limit of $F^{G}_N$ in~\eqref{e.F^G_N=} is given by
\begin{align}\label{e.limF^G_N=so}
    \lim_{N\to\infty} F^G_N = \sup_{z\in\S^\D}\inf_{\pi\in\Pi}\inf_{x\in \S^\D}\Ll\{\sP^{\bfs}(\pi,x)-z\cdot x +\frac{1}{2}\xi(z)+G(z)\Rr\}.
\end{align}
In particular, the limit of $F_N$ in~\eqref{e.F_N=} is given by
\begin{align}\label{e.limF_N=}
    \lim_{N\to\infty} F_N = \sup_{z\in\S^\D}\inf_{\pi\in\Pi}\inf_{x\in \S^\D}\Ll\{\sP^{\bfs}(\pi,x)-z\cdot x +\frac{1}{2}\xi(z)\Rr\}.
\end{align}
\end{corollary}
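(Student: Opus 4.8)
The plan is to derive Corollary~\ref{c.F^G_N_so} directly from Theorem~\ref{t.F^soc,G_N}, by absorbing the self-overlap correction into the interaction function $G$. The key is an \emph{exact} finite-$N$ identity. When $h=\bfs$ we have $m_N=\frac1N\sum_{i=1}^N\bfs(\sigma_i)=\frac{\sigma\sigma^\intercal}{N}\in\S^\D$, so for every $\sigma$,
\[
H_N(\sigma)+NG(m_N)=\Big(H_N(\sigma)-\tfrac{N}{2}\,\xi\big(\tfrac{\sigma\sigma^\intercal}{N}\big)\Big)+N\Big(G(m_N)+\tfrac12\xi(m_N)\Big).
\]
Thus, setting $\tilde G(m):=G(m)+\tfrac12\xi(m)$ for $m\in\S^\D$ (identified isometrically with $\R^{\D(\D+1)/2}$), the free energy $F_N^G$ coincides with the free energy~\eqref{e.F^soc,G_N=} in which $h=\bfs$ and $G$ is replaced by $\tilde G$.

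Before applying Theorem~\ref{t.F^soc,G_N} I would check that $\tilde G$ meets the hypotheses: it is locally Lipschitz because $G$ is locally Lipschitz by~\ref{i.h_G} and $\xi$ is locally Lipschitz by~\ref{i.xi_loc_lip}, and no convexity is needed since Theorem~\ref{t.F^soc,G_N} does not assume $G$ convex. The one mild technicality is that $\bfs$ is not bounded on all of $\R^\D$, as~\ref{i.h_G} literally asks; but $P_N$ is carried by $N$ copies of the closed unit ball (by~\ref{i.P_1}), so one may replace $\bfs$ by any bounded measurable function agreeing with it there without changing any of the free energies involved — this is exactly the convention already in force in Corollary~\ref{c.F^G_N}, and may simply be invoked.

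Granting these points, Theorem~\ref{t.F^soc,G_N} applied with $h=\bfs$ and $\tilde G$ in place of $G$ gives
\[
\lim_{N\to\infty}F_N^G=\sup_{z\in\S^\D}\inf_{\pi\in\Pi}\inf_{x\in\S^\D}\big\{\sP^{\bfs}(\pi,x)-z\cdot x+\tilde G(z)\big\},
\]
after transporting the outer $\sup$ and inner $\inf$ from $\R^{\D(\D+1)/2}$ to $\S^\D$ via the isometry; substituting $\tilde G=G+\tfrac12\xi$ yields~\eqref{e.limF^G_N=so}. For~\eqref{e.limF_N=}, take $G\equiv 0$: then the term $NG(m_N)$ disappears, so $F_N^G=F_N$ by~\eqref{e.F_N=}, while $\tilde G=\tfrac12\xi$ and $G(z)=0$ on the right-hand side, which is precisely the stated formula.

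I do not anticipate a substantial obstacle: the entire content is the finite-$N$ rearrangement above together with Theorem~\ref{t.F^soc,G_N}. The only place to be slightly careful is bookkeeping — confirming that the $\sP^{\bfs}$ produced by Theorem~\ref{t.F^soc,G_N} (with its $h$ taken to be $\bfs$) is the same functional named in Corollary~\ref{c.F^G_N_so}, and that the isometric identification $\S^\D\cong\R^{\D(\D+1)/2}$ is used consistently on both sides. (Alternatively, one could start from Corollary~\ref{c.F^G_N} with $h=\bfs$ and collapse its two redundant copies of $\bfs$, using that $\sP^{(\bfs,\bfs)}(\pi,(x_1,x_2))$ depends on $(x_1,x_2)$ only through $x_1+x_2$ so that the $\inf$ over $x_1-x_2$ forces the two sup-variables to coincide; but the route through Theorem~\ref{t.F^soc,G_N} is cleaner since it avoids the explicit form of $\sP^h$.)
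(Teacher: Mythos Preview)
Your proposal is correct and follows essentially the same approach as the paper: the paper's proof also observes that when $h=\bfs$ one has $F^G_N = F^{\soc,G}_N$ with $G$ replaced by $\tfrac{1}{2}\xi+G$, applies Theorem~\ref{t.F^soc,G_N} with this substitution to obtain~\eqref{e.limF^G_N=so}, and then sets $G=0$ for~\eqref{e.limF_N=}. Your version is in fact slightly more careful in flagging the boundedness technicality for $\bfs$ and the local Lipschitzness of $\tilde G$.
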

Here, \eqref{e.limF_N=} is obtained from~\eqref{e.limF^G_N=so} by setting $G=0$ and \eqref{e.limF_N=} recovers the result for the standard vector spin glass in~\cite{pan.vec}. 

Section~\ref{s.proofs} is devoted to the proofs of these results using PDE techniques. In Section~\ref{s.alt_proof}, we present an alternative proof of Corollary~\ref{c.F^G_N_so} using the argument of constrained free energy in~\cite{pan05,pan.potts,pan.vec}, which is closer to the analysis done in~\cite{Baldwin2023}.

\begin{remark}\label{r.m_N}
The free energy $F^\soc_N$ in~\eqref{e.F^soc_N=} is ``pure'' since the overlap is the only order parameter in the formula. Further, we will show in Proposition~\ref{p.cvg_m} of Appendix~\ref{s.cvg_m} that, with the self-overlap correction, all conventional orders are trivial in the sense that $m_N$ converges to a constant for any bounded $h$.
\end{remark}

\begin{remark}In view of Corollary~\ref{c.F^G_N}, without the self-overlap correction, it is necessary to include the self-overlap as part of the conventional order. From~\eqref{e.limF_N=}, it is also a minimal requirement. One can interpret this as that the self-overlap is the canonical conventional order parameter. This is not surprising because the self-overlap together with the overlap (which characterizes the correlation between different replicas) completely describes the entire overlap array.
\end{remark}

\begin{remark}
\rv{We comment on the convexity requirement~\ref{i.xi_convex} on $\xi$. In \cite{Baldwin2023}, the authors stated below \cite[Eq.~(77)]{Baldwin2023} that the convexity assumption is needed for technical reasons and that the min-max formula should hold without it. However, we disagree with this for the following reasons. It is argued in \cite[Section~6]{mourrat2021nonconvex} that the free energy in the bipartite model (where $\xi$ is not convex) does not admit a variational representation, or at least one that originates from rewriting the Parisi formula as a saddle point problem. Hence, it is not clear if, in general, there is a variational formula to begin with, when there is no additional conventional order.}
\end{remark}

\subsection{Related works}

As mentioned before, the source of motivation is from~\cite{Baldwin2023}, which used the ``min-max'' prescription to clarify the relationship between quenched free energies, annealed free energies, and the overlap (which is called the ``replica order parameter'' therein). An earlier work that considered a more restrictive class of spin glass models with conventional order parameters is~\cite{mottishaw1986first}. Since the self-overlap is one particular conventional order parameter, these connect with the author's recent studies~\cite{chen2023self,chen2023on} on the self-overlap in vector spin glasses. 
The idea of adding the self-overlap correction first appeared in the Hamilton--Jacobi approach to spin glass models \cite{mourrat2022parisi,mourrat2020extending,mourrat2021nonconvex,mourrat2023free}. For more detail on this approach, we refer to~\cite{HJbook}.

\rv{Works on spin glasses with CW-type interactions include~\cite{chen2014mixed,camilli2022inference}. The model in~\cite{chen2014mixed} has the mixed even-order SK Hamiltonian with additional term $\frac{\beta }{2N}\Ll(\sum_{i=1}^N\sigma_i\Rr)^2 + \sum_{i=1}^Nh_i\sigma_i$ with $\beta\geq 0$ and i.i.d.\ Gaussians $(h_i)_{1\leq i\leq N}$. The variational formula for the free energy is obtained through a simple argument via usual approaches from the CW model. In~\cite{camilli2022inference}, the authors considered the SK Hamiltonian plus $\frac{\nu }{2N}\Ll(\sum_{i=1}^N\xi_i\sigma_i\Rr)^2 + \lambda\sum_{i=1}^N\xi_i\sigma_i$ with $\nu\geq0$, $\lambda\in\R$, and i.i.d.\ random variables $(\xi_i)_{1\leq i\leq N}$ with finite fourth moment. The proof relies on the adaptive interpolation method introduced in~\cite{barbier2019adaptive,barbier2019adaptive2}. The model in~\cite{chen2014mixed} can be covered here by absorbing the centered part of $\sum_{i=1}^Nh_i\sigma_i$ into $H_N(\sigma)$ and we believe that the method demonstrated there does not work if the CW interaction is non-convex. While our results are not directly applicable to the model in~\cite{camilli2022inference}, we believe that a modified argument similar to the one for statistical inference models in~\cite{chen2022statistical} could work.}

Since the theme of this work is to find a formula for free energy, we review works along this direction. Parisi initially proposed the formula for the free energy in the SK model in \cite{parisi79,parisi80}. Guerra rigorously proved the upper bound in \cite{gue03} and Talagrand proved the matching lower bound in \cite{Tpaper}.
Extensions were made to various settings: the SK model with soft spins \cite{pan05}, the scalar mixed $p$-spin model \cite{panchenko2014parisi,pan}, the multi-species model \cite{barcon,pan.multi}, and the mixed $p$-spin model with vector spins \cite{pan.potts,pan.vec}. The Parisi formula for the balanced Potts spin glass was recently established in~\cite{bates2023parisi}.
For spherical spins, Parisi-type variational formulas have been proved for the SK model \cite{tal.sph}, the mixed $p$-spin model \cite{chen2013aizenman}, and the multi-species model \cite{bates2022free}.

\rv{
As we are employing PDE tools in the study of spin glasses, we briefly describe the history of this approach. It was initiated by Guerra in~\cite{guerra2001sum} in the replica symmetric regime, which was later extended to the replica symmetry breaking scenario in~\cite{barra1} and explored further in various settings in~\cite{barra2,abarra,barramulti,barra2014quantum}. The application of the PDE techniques to the CW model starts in~\cite{barra2008mean} and continues in~\cite{genovese2009mechanical}. Mathematically, the PDE approach was adopted by Mourrat in~\cite{mourrat2022parisi,mourrat2020hamilton}. Applications to statistical inference models also include~\cite{mourrat2021hamilton,HB1,HBJ,chen2022statistical,chen2023free}. The well-posedness of the PDEs arising from these contexts has been studied in~\cite{chen2022hamilton,chen2022hamilton2}.
}

\subsection{Acknowledgements}
The author thanks Jean-Christophe Mourrat for stimulating discussions. This project has received funding from the European Research Council (ERC) under the European Union’s Horizon 2020 research and innovation programme (grant agreement No.\ 757296).

\section{Proofs}\label{s.proofs}

The goal is to prove our main results. We start with some definitions and conditions on $\xi$.
Let $\S^\D_+$ denote the subset of $\S^\D$ comprising of positive semi-definite matrices. 
A useful fact (see \cite[Theorem~7.5.4]{horn2012matrix}) is that if $a \in \S^\D$, then
\begin{align}\label{e.a.b>0}
a\in\S^\D_+ \quad \Longleftrightarrow\quad \forall b\in \S^\D_+:\ a\cdot b\geq 0.
\end{align}
Throughout, we assume that
\begin{enumerate}[start=1,label={\rm (H\arabic*)}]
    \item \label{i.xi_loc_lip}
    $\xi$ is differentiable and $\nabla\xi$ is locally Lipschitz;
    \item \label{i.xi_sym} $\xi\geq 0$ on $\S^\D_+$, $\xi(0)=0$, and $\xi(a)=\xi(a^\intercal)$ for all $a\in \R^{\D\times\D}$;
    \item \label{i.xi_incre} if $a,\,b\in\S^\D_+$ satisfies $a - b\in\S^\D_+$, then $\xi(a)\geq \xi(b)$ and $\nabla \xi(a)- \nabla \xi(b)\in \S^\D_+$.
    \item \label{i.xi_convex}
    $\xi$ is convex over $\S^\D_+$.
\end{enumerate}
The derivative $\nabla \xi:\R^{\D\times \D}\to \R^{\D\times \D}$ is defined with respect to the entry-wise inner product.
This encompasses a broad class of models and we refer to~\cite[Section~6]{mourrat2023free} for examples.

\subsection{Parisi-type functional}

We recall the definition of the Parisi-type functional appearing in the variational formulas.
The collection $\Pi$ consists of matrix-valued paths:
\begin{align*}\Pi = \Ll\{\pi:[0,1]\to\S^\D_+\  \big|\  \text{$\pi$ is left-continuous and increasing}\Rr\}
\end{align*}
where 
$\pi$ is increasing in the sense that
\begin{align*}
    s'\geq s\quad\implies\quad \pi(s')- \pi(s)\in\S^\D_+.
\end{align*}
To define the Parisi functional, we recall the \textit{Ruelle probability cascade} \cite{ruelle1987mathematical}. Let $\mathfrak{R}$ denote the Ruelle probability cascade with overlap uniformly distributed over $[0,1]$ (see \cite[Theorem~2.17]{pan}). Precisely, $\mathfrak{R}$ is a random probability measure on the unit sphere in a separable Hilbert space, with the inner product denoted by $\alpha\wedge\alpha'$. Independently sampling $\alpha$ and $\alpha'$ from $\mathfrak{R}$, the law of $\alpha\wedge\alpha'$ under $\E \mathfrak{R}^{\otimes 2}$ is the uniform distribution over $[0,1]$, where $\E$ integrates the randomness inherent in $\mathfrak{R}$. Almost surely, the support of $\mathfrak{R}$ is ultrametric in the induced topology. For rigorous definitions and comprehensive properties, we direct the reader to \cite[Chapter2]{pan} (also see \cite[Chapter5]{HJbook}).

Conditioned on $\mathfrak{R}$ (i.e.\ fixing any of its realization), for each $\pi\in\Pi$, let $(w^\pi(\alpha))_{\alpha\in\supp\mathfrak{R}}$ be a centered $\R^\D$-valued Gaussian process with covariance
\begin{align*}\E w^\pi(\alpha)w^\pi(\alpha')^\intercal = \pi (\alpha\wedge\alpha'),\quad\forall \alpha,\alpha\in\supp\mathfrak{R}.
\end{align*}
For the construction and properties of this process, we refer to \cite[Section~4 and Remark~4.9]{HJ_critical_pts}.

Based on the last property in \ref{i.xi_sym}, it follows that $\nabla \xi(a)\in\S^\D$ when $a \in \S^\D$. Consequently, the combination of the first property in \ref{i.xi_incre} and \eqref{e.a.b>0} ensures that $\nabla\xi(a)\in\S^\D_+$ for every $a\in\S^\D_+$. Leveraging these observations along with the second property in \ref{i.xi_incre}, we deduce that $\nabla\xi\circ\pi\in\Pi$ holds for all $\pi\in\Pi$.

With these elements in place, we define the Parisi functional $\sP^h(\pi,x)$ for $\pi\in\Pi$ and $x\in\S^\D$ as follows:
\begin{align}\label{e.sP(pi,x)}
\begin{split}
    \sP^h(\pi,x)= \E \log\iint \exp\left(w^{\nabla\xi\circ\pi}(\alpha)\cdot \tau - \frac{1}{2}\nabla\xi\circ\pi(1)\cdot \tau\tau^\intercal+x\cdot h(\tau)\right)\d P_1(\tau)\d \mathfrak{R}(\alpha) 
    \\
+\frac{1}{2}\int_0^1 \Ll(\pi(s)\cdot \nabla\xi(\pi(s))-\xi(\pi(s)) \Rr)\d s .
\end{split}
\end{align}
Here, $\E$ integrates the Gaussian randomness in $w^{\nabla\xi\circ\pi}(\alpha)$ and then the randomness of $\mathfrak{R}$. It is noteworthy that $- \frac{1}{2}\nabla\xi\circ\pi(1)\cdot \tau\tau^\intercal = -\frac{1}{2}\E\Ll(w^{\nabla\xi\circ\pi}(\alpha)\cdot \tau\Rr)^2$, is the self-overlap correction in this functional.

We set
\begin{align}\label{e.sP(pi)=}
    \sP(\pi) = \sP^h(\pi,0),\quad\forall \pi\in\Pi
\end{align}
which is independent of $h$. This has appeared in~\eqref{e.limF^soc_N=}. We also define
\begin{align}\label{e.sP^h=}
    \sP^h(x) = \inf_{\pi\in\Pi} \sP^h(\pi,x),\quad\forall x\in \R^d,
\end{align}
and we denote the function $x\mapsto \sP^h(x)$ simply as $\sP^h$.

\subsection{Preliminaries}

Recall the definition of the mean magnetization $m_N$ in~\eqref{e.m=}.
Due to the assumption that $h$ is bounded, we have
\begin{align}\label{e.|m|<}
    \sup_{N\in\N} \Ll|m_N\Rr|<\infty.
\end{align}
Throughout, we write $\R_+=[0,\infty)$.
For $N\in\N$ and $(t,x) \in \R_+\times \R^d$, we set
\begin{gather}
    \cF_N(t,x) = \E \tilde \cF_N(t,x), \label{e.cF_N=}
    \\
    \tilde\cF_N(t,x) = \frac{1}{N} \log \int \exp\Ll(H_N(\sigma) - \frac{N}{2}\xi\Ll(\frac{\sigma\sigma^\intercal}{N}\Rr) + tN G\Ll(m_N\Rr)+Nx\cdot m_N\Rr)\d P_N(\sigma).\notag
\end{gather}
We denote by $\cF_N$ the function $(t,x)\mapsto \cF_N(t,x)$.

In this section, we always assume conditions~\ref{i.P_1}--\ref{i.h_G}. We start by identifying the limit of the initial value $\cF_N(0,\cdot)$.

\begin{lemma}\label{l.limcF_N(0,x)=}
For every $x\in\R^d$, we have
\begin{align*}
    \lim_{N\to\infty} \cF_N(0,x) =\sP^h(x).
\end{align*}
\end{lemma}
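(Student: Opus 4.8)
The plan is to notice that $\cF_N(0,x)$ differs from the self-overlap-corrected free energy $F^\soc_N$ of \eqref{e.F^soc_N=} only through the deterministic term $Nx\cdot m_N$, which acts site by site and can therefore be absorbed into the single-spin reference measure. Once this is done, the limit is read off directly from the already-established formula \eqref{e.limF^soc_N=}, and it remains only to match the two Parisi-type functionals.

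First I would rewrite the extra term. Since $Nx\cdot m_N = \sum_{i=1}^N x\cdot h(\sigma_i)$, we have $\exp(Nx\cdot m_N) = \prod_{i=1}^N e^{x\cdot h(\sigma_i)}$, so setting $\d P_1^x(\tau) := e^{x\cdot h(\tau)}\,\d P_1(\tau)$ and $\d P_N^x := \otimes_{i=1}^N \d P_1^x$ gives
\[
\cF_N(0,x) = \frac1N\E\log\int \exp\Ll(H_N(\sigma) - \tfrac N2\xi\Ll(\tfrac{\sigma\sigma^\intercal}{N}\Rr)\Rr)\d P_N^x(\sigma),
\]
which is precisely the free energy in \eqref{e.F^soc_N=} built from $P_1^x$ in place of $P_1$. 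Next I would check that $P_1^x$ still satisfies \ref{i.P_1}: by \ref{i.h_G} the map $\tau\mapsto e^{x\cdot h(\tau)}$ is bounded and measurable, so $P_1^x$ is a finite measure, and it is supported in the closed unit ball because $P_1$ is. Since conditions \ref{i.xi_loc_lip}--\ref{i.xi_convex} on $\xi$ hold throughout, the result \eqref{e.limF^soc_N=} (i.e.\ \cite[Corollary~8.3]{HJ_critical_pts}) applies with $P_1$ replaced by $P_1^x$ and yields $\lim_{N\to\infty}\cF_N(0,x) = \inf_{\pi\in\Pi}\sP_{P_1^x}(\pi)$, where $\sP_{P_1^x}$ denotes the functional \eqref{e.sP(pi)=} formed from $P_1^x$.

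Finally I would identify $\sP_{P_1^x}(\pi)$ with $\sP^h(\pi,x)$. In \eqref{e.sP(pi,x)} the reference measure enters only through the inner integral $\int \exp\bigl(w^{\nabla\xi\circ\pi}(\alpha)\cdot\tau - \tfrac12\nabla\xi\circ\pi(1)\cdot\tau\tau^\intercal + x\cdot h(\tau)\bigr)\,\d P_1(\tau)$, while the remaining summand $\tfrac12\int_0^1(\pi(s)\cdot\nabla\xi(\pi(s)) - \xi(\pi(s)))\,\d s$ does not involve it at all. Hence evaluating this functional at the second argument $0$ but with $P_1$ replaced by $P_1^x$ reproduces exactly the integrand of $\sP^h(\pi,x)$ with the original $P_1$, so $\sP_{P_1^x}(\pi) = \sP^h(\pi,x)$ for every $\pi\in\Pi$. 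Taking the infimum over $\pi\in\Pi$ and invoking \eqref{e.sP^h=} gives $\lim_{N\to\infty}\cF_N(0,x) = \inf_{\pi\in\Pi}\sP^h(\pi,x) = \sP^h(x)$.

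There is no serious obstacle here: the argument is short and essentially bookkeeping. The only steps that really require attention are confirming that the quoted limit \eqref{e.limF^soc_N=} is valid for an arbitrary finite measure on the unit ball, so that it can be applied verbatim to the tilt $P_1^x$, and verifying the two measure-theoretic properties of $P_1^x$ above — and it is precisely at this last point that the boundedness of $h$ from \ref{i.h_G} is used.
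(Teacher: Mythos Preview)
Your proposal is correct and follows essentially the same approach as the paper: absorb the deterministic tilt $e^{x\cdot h(\tau)}$ into the single-spin reference measure, verify that the tilted measure still satisfies \ref{i.P_1}, and invoke the known limit~\eqref{e.limF^soc_N=}. The paper states this in two lines, while you additionally spell out the identification $\sP_{P_1^x}(\pi)=\sP^h(\pi,x)$, which the paper leaves implicit.
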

\begin{proof}
The convergence at $x=0$ has been proved in~\cite[Corollary~8.3]{HJ_critical_pts} (which is exactly~\eqref{e.limF^soc_N=}). For $x\neq 0$, we can substitute $\d P_1$ with $e^{x\cdot h(\tau)}\d P_1(\tau)$ and notice that the new measure still satisfies~\ref{i.P_1}. The convergence at $x$ follows from this substitution.
\end{proof}

For any function $g:(0,\infty)\times \R^d\to\R$, we denote the derivative of $g$ in the first variable by $\partial_tg$, the ($\R^d$-valued) gradient in the second variable by $\nabla g$, and the Laplacian in the second variable by $\Delta g$.

\begin{lemma}\label{l.sP}
The function $\sP^h:\R^d\to\R$ is convex, Lipschitz, and continuously differentiable.
\end{lemma}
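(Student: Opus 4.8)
The plan is to deduce all three properties of $\sP^h$ from properties of the joint functional $\sP^h(\pi,x)$ together with the limit identification in Lemma~\ref{l.limcF_N(0,x)=}. First I would establish convexity: for each fixed $\pi\in\Pi$, the map $x\mapsto \sP^h(\pi,x)$ is convex, because it is the composition of the affine map $x\mapsto (x\cdot h(\tau))_\tau$ with a log-integral of an exponential, and such log-partition functions are convex in the linear parameter by H\"older's inequality (the added terms $w^{\nabla\xi\circ\pi}(\alpha)\cdot\tau$, the self-overlap correction, and the $\pi$-dependent additive constant do not involve $x$). Since $\sP^h(x)=\inf_{\pi\in\Pi}\sP^h(\pi,x)$ is an infimum of convex functions, convexity is not immediate from this alone; instead I would obtain it from Lemma~\ref{l.limcF_N(0,x)=}, observing that each $\cF_N(0,x)$ is convex in $x$ (again a log-partition function, after taking $\E$, which preserves convexity as a pointwise limit of convex combinations — more precisely $\tilde\cF_N(0,\cdot)$ is convex by H\"older and $\E$ of convex is convex), hence the pointwise limit $\sP^h$ is convex.

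Next, for the Lipschitz bound, the key input is that $h$ is bounded, say $|h|\le C$ on the support of $P_1$ (condition~\ref{i.h_G}). Then for each $N$ the function $x\mapsto\tilde\cF_N(0,x)$ has gradient $\nabla_x\tilde\cF_N(0,x) = \langle m_N\rangle$, a Gibbs average of $m_N=\frac1N\sum_i h(\sigma_i)$, which satisfies $|\langle m_N\rangle|\le C$ uniformly in $N$ and $x$ by~\eqref{e.|m|<}; taking $\E$ and then $N\to\infty$, the limit $\sP^h$ is $C$-Lipschitz. Alternatively, and more cleanly at the level of the variational formula, one checks directly that $|\sP^h(\pi,x)-\sP^h(\pi,x')|\le C|x-x'|$ uniformly in $\pi$ (since inside the log-integral the only $x$-dependence is $x\cdot h(\tau)$ with $|h(\tau)|\le C$, so the integrand changes by at most $e^{C|x-x'|}$ pointwise), and an infimum of a family of uniformly $C$-Lipschitz functions is $C$-Lipschitz; this gives the Lipschitz property without invoking the limit.

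Finally, differentiability: a finite convex function on $\R^d$ is differentiable at a point iff its subdifferential there is a singleton, equivalently iff all directional derivatives are linear; and a convex function that is differentiable everywhere is automatically $C^1$ (the gradient of a convex function is continuous on the interior of its domain). So it suffices to prove $\sP^h$ is everywhere differentiable. Here I would appeal to the structure: $\sP^h$ is the limit of the convex functions $\cF_N(0,\cdot)$, and the limit of the free energies of this type is expected to be differentiable because of Gaussian concentration plus convexity — the standard argument is that $\nabla_x\cF_N(0,x)=\E\langle m_N\rangle$, that $\langle m_N\rangle$ concentrates (so $\var$ is $o(1)$), and that for convex functions pointwise convergence plus convergence of derivatives along a dense set forces differentiability of the limit with $\nabla\sP^h(x)=\lim\nabla\cF_N(0,x)$. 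I expect this differentiability step to be the main obstacle, since it requires either a concentration estimate for $m_N$ under the Gibbs measure (of the type used to prove differentiability of the Parisi functional, cf.\ the overlap concentration arguments in~\cite{pan}) or an appeal to the convergence result in~\cite{HJ_critical_pts} in a form strong enough to yield $C^1$ regularity; the convexity and Lipschitz parts are soft and follow from H\"older's inequality and the boundedness of $h$.
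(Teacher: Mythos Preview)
Your treatment of convexity and Lipschitzness is fine and matches the paper: both are obtained by passing through $\cF_N(0,\cdot)$, computing first and second derivatives in $x$ as Gibbs expectations and variances of $m_N$, using the bound~\eqref{e.|m|<}, and taking $N\to\infty$ via Lemma~\ref{l.limcF_N(0,x)=}. The alternative Lipschitz argument you give (uniform Lipschitz bound on $\sP^h(\pi,\cdot)$, then infimum) is also valid.

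The gap is in differentiability. Your proposed route---show $\nabla\cF_N(0,x)=\E\la m_N\ra$ converges, or show $m_N$ concentrates under the Gibbs measure, and infer differentiability of the limit---does not close. For convex $f_N\to f$ pointwise, convergence of $\nabla f_N(x)$ to some $a$ only yields $a\in\partial f(x)$; it does not force $\partial f(x)$ to be a singleton. Conversely, proving that $\nabla\cF_N(0,x)$ actually converges is essentially equivalent to what you want. And the concentration of $m_N$ (i.e.\ that $N^{-1}\Delta\cF_N=\E\la|m_N-\la m_N\ra|^2\ra\to0$) is not available a priori; the paper in fact derives it in Appendix~\ref{s.cvg_m} \emph{from} the differentiability of $\sP^h$, so invoking it here would be circular. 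The paper's argument avoids $\cF_N$ entirely for this step and instead exploits the variational structure $\sP^h(x)=\inf_\pi\sP^h(\pi,x)$: given $x$ and any subgradient $a\in\partial\sP^h(x)$, pick $\pi_\eps$ with $\sP^h(\pi_\eps,x)\le\sP^h(x)+\eps$; since $\sP^h\le\sP^h(\pi_\eps,\cdot)$ pointwise, one sandwiches the difference quotients of $\sP^h$ by those of $\sP^h(\pi_\eps,\cdot)$ up to an $\eps$-error. The crucial input is that each $\sP^h(\pi,\cdot)$ is $C^2$ in $x$ with second derivative bounded \emph{uniformly in $\pi$} (same variance computation as~\eqref{e.2nd_der_cF}, bounded by $\sup|h|^2$), so a second-order Taylor expansion gives $|y\cdot a - y\cdot\nabla_x\sP^h(\pi_\eps,x)|\le Cr+\eps r^{-1}$; taking $r=\sqrt{\eps}\to0$ shows $y\cdot a$ is uniquely determined. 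This uniform-in-$\pi$ smoothness of the functions over which the infimum is taken is the missing idea.
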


\begin{proof}
For $x\in\R^d$, let $\la\cdot\ra_x$ be the Gibbs measure naturally associated with $\cF_N(0,x)$.
Fixing any $x,\, y \in \R^d$, for $r\in[0,1]$, we can compute
\begin{align}\label{e.dF_N(x+ry)}
    \frac{\d}{\d r}\cF_N(0,x+ry) = \E \la y\cdot m_N\ra_{x+ry}.
\end{align}
The uniform bound on $m_N$ as in~\eqref{e.|m|<} implies $\Ll|\frac{\d}{\d r}\cF_N(0,x+ry)\Rr|\leq C|y|$ uniformly in $x,y,r,N$ for some constant $C$. Hence, we have $|\cF_N(0,x+y)-\cF_N(0,y)|\leq C|y|$. Sending $N\to\infty$ and using Lemma~\ref{l.limcF_N(0,x)=}, we can deduce that $\sP^h$ is Lipschitz.

We denote by $m'_N$ an independent copy of $m_N$ under the Gibbs measure.
We differentiate~\eqref{e.dF_N(x+ry)} one more time to get
\begin{align}\label{e.2nd_der_cF}
\begin{split}
    \frac{\d^2}{\d r^2}\cF_N(0,x+ry) & = N\E \la \Ll(y\cdot m_N\Rr)^2 -\Ll(y\cdot m_N\Rr)\Ll(y\cdot m'_N\Rr) \ra_{x+ry}
    \\
    & = N\E \la \Ll(y\cdot m_N\Rr)^2 -\la y\cdot m_N\ra_{x+ry}^2 \ra_{x+ry}.
\end{split}
\end{align}
Since the right-hand side is nonnegative, we have verified that $\cF_N(0,\cdot)$ is convex. 
Sending $N\to\infty$ and using Lemma~\ref{l.limcF_N(0,x)=}, we obtain the convexity of $\sP^h$.

It is classical that (e.g.\ \cite[Theorem~25.5]{rockafellar1970convex})
the combination of convexity and differentiability implies continuous differentiability. Hence, it remains to show that $\sP^h$ is differentiable everywhere.
Fix any $x \in \R^d$. 
A vector $a\in \R^d$ is said to be a subdifferential of $\sP^h$ at $x$ if $\sP^h(y)-\sP^h(x)\geq a\cdot(y-x)$ for every $y\in\R^d$. 
Since $\sP^h$ is convex, it suffices to show that any subdifferential $a$ of $\sP^h$ at $x$ is unique. 
For each $\eps>0$, we choose $\pi_\eps$ to satisfy
\begin{align}\label{e.P(pi_eps,x)<P(x)+1/n}
    \sP^h(\pi_\eps,x) \leq \sP^h(x) +\eps.
\end{align}
Fix any $y\in \R^d$ and let $r\in (0,1]$.
Using the definition of the subdifferential, the fact that $\sP^h$ is an infimum, and \eqref{e.P(pi_eps,x)<P(x)+1/n}, we have
\begin{align*}
    y\cdot a &\leq \frac{\sP^h(x+ry) - \sP^h(x)}{r} \leq \frac{\sP^h(\pi_\eps,x+ry)-\sP^h(\pi_\eps,x)+\eps}{r},
    \\
    y\cdot a &\geq \frac{\sP^h(x) - \sP^h(x-ry)}{r} \geq \frac{\sP^h(\pi_\eps,x)-\sP^h(\pi_\eps,x-ry)-\eps}{r}.
\end{align*}
We can compute derivatives of $\sP^h(\pi_\eps,\cdot)$ at $x$ similarly as in~\eqref{e.dF_N(x+ry)} and~\eqref{e.2nd_der_cF} to see that they are bounded.
Hence, we can use Taylor's expansion of $\sP^h(\pi_\eps,\cdot)$ at $x$ in the above display to get
\begin{align*}
    \Ll|y\cdot a - y\cdot \nabla\sP^h(\pi_\eps,x)\Rr| \leq C r+\eps r^{-1}
\end{align*}
for some constant $C$ independent of $r$ and $\eps$. Setting $r= \sqrt{\eps}$ and sending $\eps\to0$, we can see that $y\cdot a$ is uniquely determined. Since $y$ is arbitrary, we conclude that the subdifferential of $\sP^h$ at $x$ is unique, and thus $\sP^h$ is differentiable at $x$. As explained earlier, this implies that $\sP^h$ is continuously differentiable.
\end{proof}

\begin{lemma}\label{l.cF_N_properties}
The following holds:
\begin{itemize}
    \item for each $N$, $\cF_N$ is Lipschitz and convex jointly over $\R_+\times \R^d$;
    \item the Lipschitzness is uniform in $N$, namely, $\sup_{N\in\N}\|\cF_N\|_\mathrm{Lip}<\infty$;
    \item there is a constant $C>0$ such that, everywhere on $(0,\infty)\times \R^d$ and for every $N$,
    \begin{align}\label{e.approx_hj}
        \Ll| \partial_t \cF_N - G\Ll(\nabla \cF_N\Rr)\Rr| \leq C\Ll(N^{-1} \Delta \cF_N\Rr)^\frac{1}{2} + C\E \Ll|\nabla\tilde \cF_N -\nabla \cF_N\Rr|.
    \end{align}
\end{itemize}
\end{lemma}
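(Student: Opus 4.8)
\emph{The plan} is to reduce the three claims to standard Gibbs‑measure computations, exploiting that all spin‑dependent quantities stay in a fixed compact set. First I would record the a priori bounds: since $P_1$ is supported on the unit ball and $h$ is bounded, there is $R<\infty$ with $|m_N|\le R$ pointwise, uniformly in $\sigma$ and $N$, and $\sigma\sigma^\intercal/N$ stays in a fixed compact subset of $\S^\D_+$ on $\supp P_N$; since $G$ is locally Lipschitz, I would fix $M:=\sup_{|m|\le R}|G(m)|<\infty$ and a Lipschitz constant $L<\infty$ for $G$ on $\{|m|\le R\}$. Next, for fixed disorder, $\tilde\cF_N$ is smooth in $(t,x)$ — differentiation under the $P_N$‑integral being justified since the exponent is affine in $(t,x)$ and the integrand is dominated (using $|m_N|\le R$, $\xi\ge0$ and a.s.\ finiteness of $\sup_\sigma H_N(\sigma)$) — and, writing $\la\cdot\ra$ for the (random, $(t,x)$‑dependent) Gibbs average with density proportional to $\exp(H_N(\sigma)-\tfrac N2\xi(\sigma\sigma^\intercal/N)+tNG(m_N)+Nx\cdot m_N)$, I would compute
\[
\partial_t\tilde\cF_N=\la G(m_N)\ra,\qquad \nabla\tilde\cF_N=\la m_N\ra,\qquad \Delta\tilde\cF_N=N\la\,|m_N-\la m_N\ra|^2\,\ra ,
\]
each bounded for fixed $N$, so that $\cF_N=\E\tilde\cF_N$ may be differentiated under $\E$ to give $\partial_t\cF_N=\E\la G(m_N)\ra$, $\nabla\cF_N=\E\la m_N\ra$, $N^{-1}\Delta\cF_N=\E\la|m_N-\la m_N\ra|^2\ra$.

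\emph{For the first two bullets}, the identities above give $|\partial_t\cF_N|\le M$ and $|\nabla\cF_N|\le R$ on all of $(0,\infty)\times\R^d$ and for every $N$, which is the uniform Lipschitz bound. For convexity, I would observe that for fixed $\sigma$ the exponent $\tfrac1N H_N(\sigma)-\tfrac12\xi(\sigma\sigma^\intercal/N)+tG(m_N)+x\cdot m_N$ is affine in $(t,x)$, so $\tilde\cF_N(t,x)=\tfrac1N\log\int e^{N(\cdots)}\,\d P_N$ is jointly convex in $(t,x)$ by Hölder's inequality, and $\cF_N=\E\tilde\cF_N$ inherits joint convexity.

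\emph{For the approximate Hamilton--Jacobi inequality}, I would use the identities above together with the fact that $\la m_N\ra$ is $\la\cdot\ra$‑a.s.\ constant (so $\la G(\la m_N\ra)\ra=G(\la m_N\ra)$) to write
\[
\partial_t\cF_N-G(\nabla\cF_N)=\E\la\,G(m_N)-G(\la m_N\ra)\,\ra\;+\;\big(\E\,G(\la m_N\ra)-G(\E\la m_N\ra)\big).
\]
Since every argument of $G$ here — $m_N$, $\la m_N\ra=\nabla\tilde\cF_N$, and $\E\la m_N\ra=\nabla\cF_N$ — lies in $\{|m|\le R\}$, I may treat $G$ as $L$‑Lipschitz: the first term is at most $L\,\E\la|m_N-\la m_N\ra|\ra\le L\big(\E\la|m_N-\la m_N\ra|^2\ra\big)^{1/2}=L(N^{-1}\Delta\cF_N)^{1/2}$ (Cauchy--Schwarz in $\la\cdot\ra$, then Jensen in $\E$, and the $\Delta\cF_N$ identity), while the second equals $\E[G(\nabla\tilde\cF_N)-G(\nabla\cF_N)]$, bounded by $L\,\E|\nabla\tilde\cF_N-\nabla\cF_N|$. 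Taking $C=L$ yields \eqref{e.approx_hj}.

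\emph{I do not expect a real obstacle here}: the computations are routine. The only points needing care will be the justification of differentiating $\tilde\cF_N$ (and $\cF_N$) twice — handled by the uniform bound $|m_N|\le R$, the sign $\xi\ge0$, and finite exponential moments of the Gaussian field $H_N$ — and the remark, used in the last step, that because $m_N$, $\la m_N\ra$, $\nabla\tilde\cF_N$ and $\nabla\cF_N$ all lie in one fixed ball, the merely local Lipschitz hypothesis on $G$ can be applied with a single constant. No convexity or monotonicity of $G$ is needed, consistent with~\ref{i.h_G}.
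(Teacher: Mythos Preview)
Your proposal is correct and follows essentially the same route as the paper's proof: the same first-derivative identities for $\partial_t\cF_N$, $\nabla\tilde\cF_N$, $\nabla\cF_N$, the same formula $N^{-1}\Delta\cF_N=\E\la|m_N-\la m_N\ra|^2\ra$, and the same two-term decomposition of $\partial_t\cF_N-G(\nabla\cF_N)$ bounded via local Lipschitzness of $G$ on the fixed ball $\{|m|\le R\}$. The only cosmetic difference is that the paper establishes joint convexity by computing the second directional derivative of $\tilde\cF_N$ and recognizing it as a (nonnegative) Gibbs variance, whereas you invoke the H\"older/log-sum argument for affine exponents; both are standard and equivalent here.
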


\begin{proof}
For brevity, we write $m=m_N$ and let $\la\cdot\ra$ be the Gibbs measure naturally associated with $\cF_N(t,x)$. The value of $(t,x)$, on which $\la\cdot\ra$ depends, will be clear from the context.
At every $(t,x) \in (0,\infty)\times \R^d$, we can compute
\begin{align}\label{e.1st_der_FN}
    \partial_t \cF_N =\E\la G(m)\ra,  \qquad \nabla \tilde\cF_N = \la m \ra, \qquad \nabla \cF_N = \E \la m \ra
\end{align}
and, for any $(s,y)\in\R\times \R^d$,
\begin{align}\label{e.2nd_der_FN}
    \frac{\d^2}{\d \eps^2}\tilde\cF_N(t+\eps s, x+\eps y)\,  \Big|_{\eps =  0} = N  \la \Ll(s G(m)+y\cdot m\Rr)^2 - \la s G(m)+y\cdot m \ra^2 \ra\geq 0. 
\end{align}
Since $|m|$ is bounded (as in~\eqref{e.|m|<}) and $G$ is locally Lipschitz, we deduce from~\eqref{e.1st_der_FN} that $\mathcal{F}_N$ is Lipschitz in both variables with $\sup_{N}\| \mathcal{F}_N\|_\mathrm{Lip}<\infty$.
Since it is easy to verify $\frac{\d^2}{\d \eps^2}\cF_N(t+\eps s, x+\eps y) = \E\frac{\d^2}{\d \eps^2}\tilde\cF_N(t+\eps s, x+\eps y)$, we obtain from~\eqref{e.2nd_der_FN} the convexity of $\cF_N$. Setting $s=0$ in~\eqref{e.2nd_der_FN}, we can get
\begin{align*}
    \Delta \cF_N = N  \E \la | m|^2 - |\la m\ra |^2 \ra = N  \E \la |m - \la m\ra |^2 \ra.
\end{align*}
Also, from~\eqref{e.1st_der_FN}, we have
\begin{align*}
    \Ll|\partial_t \cF_N - G(\nabla\cF_N)\Rr|
    &= \Ll|\E\la G(m)\ra - G\Ll(\E\la m\ra\Rr)\Rr|
    \\
    &\leq \Ll|\E\la G(m)\ra - \E G(\la m\ra) \Rr|  + \Ll|\E G(\la m\ra) -  G\Ll(\E\la m\ra\Rr)\Rr|
    \\
    &\leq C\E \la \Ll|m - \la m\ra \Rr|\ra + C\E \Ll|\la m\ra - \E \la m\ra\Rr|
\end{align*}
for some constant $C$ due to the local Lipschitzness of $G$;
and we have
\begin{align*}
    \E \Ll|\nabla \tilde \cF_N - \nabla \cF_N\Rr| = \E \Ll|\la m\ra -\E \la m\ra\Rr|.
\end{align*}
The above three displays together yield~\eqref{e.approx_hj}.
\end{proof}

\subsection{A PDE approach}
The estimate~\eqref{e.approx_hj} hints that the limit of $\cF_N$ should be the solution of
\begin{align*}
    \partial_t f- G(\nabla f)=0,\quad\text{on $(0,\infty)\times \R^d$},
\end{align*}
which is indeed the case. Here, the solution is understood in the viscosity sense. A function $f:\R_+\times \R^d\to\R$ is a \textit{viscosity subsolution} (respectively, \textit{supersolution}) of
\begin{align}\label{e.hj}
    \partial_t f- G(\nabla f) =0 \quad \text{on $\Ll(0,\infty\Rr)\times \R^d$}
\end{align}
if whenever there is a smooth $\phi:(0,\infty)\times \R^d\to\R$ such that $f-\phi$ achieves a local maximum (respectively, minimum) at some $(t,x) \in (0,\infty)\times \R^d$, we have $(\partial_t \phi-G(\nabla\phi))(t,x)\leq 0$ (respectively, $\geq 0$). If $f$ is both a viscosity subsolution and supersolution, we call $f$ a \textit{viscosity solution}.

If $f$ is a limit of $\cF_N$, then Lemma~\ref{l.limcF_N(0,x)=} implies that the relevant initial condition should be $f(0,\cdot) = \sP^h$, which is continuously differentiable by Lemma~\ref{l.sP}. 
Also, Lemma~\ref{l.cF_N_properties} ensures that $f$ is Lipschitz and convex. These two properties make the following \textit{convex selection principle} useful in this setting. We refer to \cite[Theorem~3.21 and Corollary~3.24]{HJbook} for the proof. This result first appeared in~\cite{chen2022statistical}.
\begin{theorem}[Convex selection principle]\label{t.convex_select}
Let $f:\R_+\times \R^d\to\R$ be a jointly convex and jointly Lipschitz continuous. Suppose that for any $(t,x)\in (0,\infty)\times \R^d$ and any smooth function $\phi:(0,\infty)\times\R^d\to\R$ such that $f-\phi$ has a strict local maximum at $(t,x)$, we have $(\partial\phi-G(\nabla \phi))(t,x)=0$. If moreover $f(0,\cdot)$ is continuously differentiable, then $f$ is the viscosity solution to~\eqref{e.hj}.
\end{theorem}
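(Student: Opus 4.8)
\emph{Proof plan.} The plan is to show that $f$ is simultaneously a viscosity sub- and supersolution of~\eqref{e.hj}; since $f(0,\cdot)$ is continuous (indeed continuously differentiable), the standard comparison principle for first-order Hamilton--Jacobi equations then identifies $f$ as the unique viscosity solution with initial datum $f(0,\cdot)$, which is the assertion.

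The subsolution property is the routine direction: if $\phi$ is smooth and $f-\phi$ has a local maximum at some $(t,x)\in(0,\infty)\times\R^d$, I would replace $\phi$ by $\phi_\delta(s,y)=\phi(s,y)+\delta\big(|s-t|^2+|y-x|^2\big)$ with $\delta>0$, so that $f-\phi_\delta$ has a \emph{strict} local maximum at $(t,x)$ while $(\partial_t\phi_\delta,\nabla\phi_\delta)(t,x)=(\partial_t\phi,\nabla\phi)(t,x)$; the hypothesis then gives $(\partial_t\phi-G(\nabla\phi))(t,x)=0\le 0$. (In fact equality holds at every local maximum.)

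The supersolution property is the heart of the matter and is where joint convexity of $f$ enters. First I would use Alexandrov's theorem: a convex function is twice differentiable at Lebesgue-almost every point, and at such a point $(t_0,x_0)$ (necessarily with $t_0>0$) the quadratic test function matching the first-order data of $f$ and with Hessian $D^2 f(t_0,x_0)+\eta\,\mathrm{Id}$ produces a strict local maximum of $f-\phi$, so the hypothesis forces $\partial_t f(t_0,x_0)=G(\nabla f(t_0,x_0))$; hence $f$ solves~\eqref{e.hj} classically almost everywhere. Since $f$ is convex, $\nabla f$ is continuous on the set of differentiability points, so together with continuity of $G$ this identity propagates to \emph{every} differentiability point of $f$, and consequently every reachable gradient $(\alpha,\beta)\in\partial^{*}f(t_0,x_0)$ obeys $\alpha=G(\beta)$. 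Now take $(t_0,x_0)$ and a smooth $\phi$ with $f-\phi$ having a local minimum at $(t_0,x_0)$: then $p:=(\partial_t\phi,\nabla\phi)(t_0,x_0)$ lies in the convex subdifferential $\partial f(t_0,x_0)$, which by Rockafellar's theorem is the convex hull of $\partial^{*}f(t_0,x_0)$. Writing $p=(p_t,p_x)$ and restricting the affine minorant of slope $p$ to $\{t=0\}$ shows that $y\mapsto [f(t_0,x_0)-p_t t_0-p_x\cdot x_0]+p_x\cdot y$ lies below the convex $C^1$ function $f(0,\cdot)=\sP^h$, which gives $p_t t_0\ge f(t_0,x_0)-p_x\cdot x_0+(\sP^h)^{*}(p_x)$; the supersolution inequality $p_t\ge G(p_x)$ is then equivalent to the Hopf-formula-type lower bound $f(t_0,x_0)\ge p_x\cdot x_0-(\sP^h)^{*}(p_x)+t_0\,G(p_x)$, which I would obtain by integrating the classical equation along the (a.e.-valid) characteristic issuing from $\big(0,\nabla(\sP^h)^{*}(p_x)\big)$ and using convexity of $f$ together with $\nabla f(0,\cdot)=\nabla\sP^h$.

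The hard part will be exactly this last lower bound, i.e.\ obtaining $p_t\ge G(p_x)$ for every $p\in\partial f(t_0,x_0)$ when $f$ is not differentiable at $(t_0,x_0)$: because $G$ is not assumed convex, this is not a direct application of Jensen's inequality to $\partial f(t_0,x_0)=\mathrm{conv}\big(\partial^{*}f(t_0,x_0)\big)$, and the one-sided bound has to be extracted from the joint convexity of $f$ and the convexity of the conjugate of its continuously differentiable initial datum, via the almost-everywhere validity of the equation. Once $f$ is shown to be a viscosity solution with datum $f(0,\cdot)$, the comparison principle for the Cauchy problem finishes the argument.
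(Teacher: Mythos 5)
The paper does not actually prove this theorem --- it only cites~\cite[Theorem~3.21 and Corollary~3.24]{HJbook} and notes that the result first appeared in~\cite{chen2022statistical} --- so there is no in-paper proof to compare against. Assessing your plan on its own terms: the subsolution step is correct and standard, and for the supersolution step you have correctly reduced the problem to the Hopf-type lower bound $f(t_0,x_0)\ge p_x\cdot x_0-(\sP^h)^{*}(p_x)+t_0\,G(p_x)$ for every $p\in\partial f(t_0,x_0)$, identified that Jensen fails because $G$ is not convex, and singled out joint convexity of $f$ plus $C^1$-ness of $f(0,\cdot)$ as the right replacements. That diagnosis is accurate.

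The step you leave as ``the hard part'' does close, but not quite via ``characteristics'' --- the clean mechanism is monotonicity of the directional derivative of a convex function along a straight line, which works whether or not $(t_0,x_0)$ sits on a characteristic. Concretely, take $y^{*}$ with $\nabla\sP^h(y^{*})=p_x$ (approximate if the Legendre supremum is only asymptotically attained; $(\sP^h)^{*}(p_x)<\infty$ by your affine-minorant step), set $v=(x_0-y^{*})/t_0$ and $\ell(s)=(s,y^{*}+sv)$. Then $s\mapsto f(\ell(s))$ is convex, so $\frac{\d}{\d s}f(\ell(s))$ is nondecreasing; Fubini (plus a small perturbation of the endpoint) lets you choose the line so that $\ell(s)$ is a point of differentiability of $f$ for a.e.\ $s$, whence $\frac{\d}{\d s}f(\ell(s))=G(\nabla_x f(\ell(s)))+v\cdot\nabla_x f(\ell(s))$ a.e. Since $f(0,\cdot)=\sP^h$ is $C^1$, the $x$-projection of $\partial f(0,y^{*})$ is the singleton $\{p_x\}$, and by the continuity of gradients of a convex function on its differentiability set, $\nabla_x f(\ell(s))\to p_x$ as $s\downarrow 0$; hence $\frac{\d}{\d s}f(\ell(s))\ge G(p_x)+v\cdot p_x$ for a.e.\ $s$, and integrating over $[0,t_0]$ gives exactly the Hopf lower bound. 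Combining with your affine-minorant inequality $p_t t_0\ge f(t_0,x_0)-p_x\cdot x_0+(\sP^h)^{*}(p_x)$ then yields $p_t\ge G(p_x)$. So your outline is essentially correct and, once the line-monotonicity argument replaces the characteristics heuristic, it becomes a complete proof; the Rockafellar/reachable-gradients digression in your write-up turns out not to be needed for the final inequality, since the argument above applies to an arbitrary element of $\partial f(t_0,x_0)$.
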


Here, the convergence in the local uniform topology means uniform convergence on every compact subset. As an application of this theorem, we can identify the limit of $\cF_N$.

\begin{lemma}\label{l.FN_cvg_f}
As $N\to\infty$, $\cF_N$ converges in the local uniform topology to the unique viscosity solution $f$ of~\eqref{e.hj} with initial condition $f(0,\cdot)=\sP^h$.
\end{lemma}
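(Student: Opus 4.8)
The plan is to combine the a priori estimates from Lemma~\ref{l.cF_N_properties}, the identification of the initial data in Lemma~\ref{l.limcF_N(0,x)=}, and the convex selection principle (Theorem~\ref{t.convex_select}).

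First I would establish precompactness. By Lemma~\ref{l.cF_N_properties}, the family $\{\cF_N\}$ is uniformly Lipschitz on $\R_+\times\R^d$, hence equicontinuous and locally bounded (the values at a fixed point, say $(0,0)$, converge by Lemma~\ref{l.limcF_N(0,x)=}). By Arzel\`a--Ascoli and a diagonal argument, every subsequence has a further subsequence converging in the local uniform topology to some limit $f$, which is automatically jointly Lipschitz and, being a local uniform limit of the convex functions $\cF_N$, jointly convex. Since the candidate limit equation \eqref{e.hj} has a unique viscosity solution with the given continuously differentiable initial data $\sP^h$ (by Lemma~\ref{l.sP}; uniqueness is the standard comparison principle for Hamilton--Jacobi equations with continuous Hamiltonian and Lipschitz data, and one can also invoke that Theorem~\ref{t.convex_select} pins down a unique object), it suffices to show that \emph{every} such subsequential limit $f$ is this viscosity solution; the full sequence then converges.

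So fix a subsequential limit $f$. Its initial condition is $f(0,\cdot)=\sP^h$ by Lemma~\ref{l.limcF_N(0,x)=} and uniform convergence. To apply Theorem~\ref{t.convex_select}, I must verify: whenever $\phi$ is smooth and $f-\phi$ has a \emph{strict} local maximum at some $(t,x)\in(0,\infty)\times\R^d$, then $(\partial_t\phi-G(\nabla\phi))(t,x)=0$. Because $\cF_N\to f$ locally uniformly and the maximum is strict, there exist points $(t_N,x_N)\to(t,x)$ at which $\cF_N-\phi$ has a local maximum. At such a point, $\nabla\cF_N(t_N,x_N)=\nabla\phi(t_N,x_N)$, $\partial_t\cF_N(t_N,x_N)=\partial_t\phi(t_N,x_N)$, and $\Delta\cF_N(t_N,x_N)\leq\Delta\phi(t_N,x_N)$ (since $\cF_N$ is smooth in the interior — it is an analytic function of the parameters — and $\phi$ touches it from above). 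Plugging into the approximate PDE \eqref{e.approx_hj} gives
\begin{align*}
    \Ll|\partial_t\phi - G(\nabla\phi)\Rr|(t_N,x_N) \leq C\Ll(N^{-1}\Delta\phi(t_N,x_N)\Rr)^{\frac12} + C\,\E\Ll|\nabla\tilde\cF_N-\nabla\cF_N\Rr|(t_N,x_N).
\end{align*}
The first term on the right vanishes as $N\to\infty$ since $\Delta\phi$ is bounded near $(t,x)$. The second term is the crux: I must show the fluctuations of $\nabla\tilde\cF_N=\la m_N\ra$ around its mean tend to $0$. This is the main obstacle. The standard route is a Gaussian concentration inequality: $\tilde\cF_N(t,x)$ is a Lipschitz (indeed, by the computation of its covariance structure, controllably so) function of the Gaussian disorder in $H_N$, so it concentrates around $\cF_N(t,x)=\E\tilde\cF_N(t,x)$; convexity of $\tilde\cF_N$ in $x$ then upgrades pointwise $L^1$ concentration of $\tilde\cF_N$ to $L^1$ concentration of its gradient $\nabla\tilde\cF_N$ (a convexity argument: if convex functions are close in $L^1$ on a neighborhood, their gradients are close in $L^1$ on a slightly smaller neighborhood). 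One would invoke the relevant concentration lemma from the Hamilton--Jacobi literature (e.g.\ the analogue of \cite[Chapter~4]{HJbook} or \cite{HJ_critical_pts}); alternatively, since the estimate is needed only at a single limiting point and the right-hand side of \eqref{e.approx_hj} need only be shown to vanish in an averaged/subsequential sense, one may integrate over a small ball and use that $\E\int_B|\nabla\tilde\cF_N-\nabla\cF_N|\to 0$. Either way, taking $N\to\infty$ along the subsequence yields $(\partial_t\phi-G(\nabla\phi))(t,x)=0$.

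Having verified the hypothesis of Theorem~\ref{t.convex_select} — joint convexity and joint Lipschitzness of $f$, the test-function condition just established, and $f(0,\cdot)=\sP^h$ continuously differentiable — we conclude $f$ is the viscosity solution of \eqref{e.hj} with initial data $\sP^h$. Since this limit is the same for every subsequence, $\cF_N\to f$ in the local uniform topology, completing the proof. I expect the concentration-of-gradient step to require the most care, as it is where the probabilistic input (Gaussian concentration for $H_N$) interfaces with the convex-analytic bookkeeping; everything else is soft compactness and PDE uniqueness.
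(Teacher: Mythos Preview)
Your proposal is correct and follows essentially the same route as the paper: Arzel\`a--Ascoli compactness, then the convex selection principle applied to any subsequential limit by passing to nearby maxima $(t_N,x_N)$ of $\cF_N-\phi$, using the second-order optimality condition there to bound $\Delta\cF_N(t_N,x_N)$, and handling the gradient-fluctuation term in~\eqref{e.approx_hj} via Gaussian concentration of $\tilde\cF_N$ upgraded to gradients through convexity. The paper carries out this last step explicitly (its estimates~\eqref{e.hessian.est} and~\eqref{e.concentration}), confirming your expectation that the concentration-of-gradient step is where the main care is needed.
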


\begin{proof}
Since $\cF_N$ is Lipschitz uniformly in $N$ (by Lemma~\ref{l.cF_N_properties}), the Arzel\`a--Ascoli theorem implies that any subsequence of $\cF_N$ has a further subsequence that converges in the local uniform topology to some $f$. By the uniqueness of the solution (e.g.\ \cite[Corollary~3.7]{HJbook}), it suffices to show that any such $f$ is the viscosity solution. For lighter notation, we assume that the entire sequence $\cF_N$ converges to $f$.

Let $(t,x)\in(0,\infty)\times \R^d$ and smooth $\phi$ satisfy that $f-\phi$ has a strict local maximum at $(t,x)$. The goal is to show that 
\begin{equation}
\label{e.eq.phi}
 (\partial_t\phi-G(\nabla\phi))(t,x)= 0.
\end{equation}
Before showing~\eqref{e.eq.phi}, let us use this to deduce the announced result.
By Lemma~\ref{l.limcF_N(0,x)=}, we have $f(0,\cdot)=\sP^h$. Since $\sP^h$ is continuously differentiable due to Lemma~\ref{l.sP} and since Lemma~\ref{l.cF_N_properties} implies that $f$ is Lipschitz and convex, we are allowed to use \eqref{e.eq.phi} and Theorem~\ref{t.convex_select} to conclude that $f$ solves the equation in~\eqref{e.hj}.

It remains to verify~\eqref{e.eq.phi}.
By the local uniform convergence, there exists $(t_N,x_N) \in (0,\infty) \times \R^d$ such that $\cF_N - \phi$ has a local maximum at $(t_N,x_N)$, and $\lim_{N\to\infty}(t_N,x_N)=(t,x)$. Notice that 
\begin{equation}  
\label{e.gradient.equalities}
(\partial_t,\nabla)( \cF_N - \phi)(t_N,x_N) = 0 .
\end{equation}
Throughout this proof, we denote by $C < \infty$ a constant that may vary from one occurrence to the next and is allowed to depend on $(t,x)$ and~$\phi$. 

We want to show that, for every $y \in \R^d$ with $|y| \le C^{-1}$,
\begin{equation}
\label{e.hessian.est}
0 \le \cF_N(t_N,x_N + y) - \cF_N(t_N,x_N ) - y \cdot \nabla \cF_N(t_N,x_N) \le C |y|^2.
\end{equation}
The convexity of $\cF_N$ gives the first inequality. To derive the other, we start by using Taylor's expansion:
\begin{align}  
\label{e.taylor}
\begin{split}
&\cF_N(t_N,x_N + y) - \cF_N(t_N,x_N ) 
\\
&= y \cdot \nabla \cF_N(t_N,x_N) + \int_0^1 (1-s) y\cdot \nabla\Ll(y \cdot \nabla \cF_N\Rr)(t_N,x_N + s y) \, \d s.
\end{split}
\end{align}
The same holds with $\cF_N$ replaced by $\phi$. By the local maximality of $\cF_N - \phi$ at $(t_N,x_N)$,
\begin{equation*}  \cF_N(t_N,x_N + y) - \cF_N(t_N,x_N ) \le \phi(t_N,x_N + y) - \phi(t_N,x_N ) 
\end{equation*}
holds for every $|y| \le C^{-1}$.
The above two displays along with~\eqref{e.gradient.equalities} and Taylor's expansion of $\phi$ (similar to~\eqref{e.taylor}) imply
\begin{equation*}  \int_0^1 (1-s)  y\cdot \nabla\Ll(y \cdot \nabla \cF_N\Rr)(t_N,x_N + s y)  \, \d s \le \int_0^1 (1-s)  y\cdot \nabla\Ll(y \cdot \nabla \phi\Rr)(t_N,x_N + s y)  \, \d s.
\end{equation*}
Since the function $\phi$ is smooth, the right side of this inequality is bounded by $C |y|^2$. Using~\eqref{e.taylor} once more, we obtain~\eqref{e.hessian.est}. 

Next, setting $B = \Ll\{ (t',x') \in \R_+ \times \R^d \ : \ |t'-t| \le C^{-1} \text{ and } |x'-x| \le C^{-1} \Rr\}$ and $\delta_N = \E \Ll[ \sup_B \Ll|\tilde\cF_N - \cF_N\Rr|  \Rr]$, we show
\begin{equation}
\label{e.concentration}
\E \Ll[ \Ll|\nabla\tilde\cF_N - \nabla \cF_N\Rr|(t_N,x_N) \Rr] \le C \delta_N^\frac 1 2.
\end{equation}
Using the convexity of $\tilde\cF_N$ due to~\eqref{e.2nd_der_FN}, we have
\begin{equation*}  \tilde\cF_N(t_N,x_N+ y) \ge\tilde\cF_N(t_N,x_N) + y \cdot \nabla\tilde\cF_N(t_N,x_N).
\end{equation*}
Combining this with \eqref{e.hessian.est}, we obtain that, for every $|y| \le C^{-1}$,
\begin{equation*}  y \cdot \Ll(\nabla\tilde\cF_N - \nabla \cF_N\Rr)(t_N,x_N)\le 2 \sup_B \Ll|\tilde\cF_N - \cF_N\Rr| + C |y|^2 .
\end{equation*}
For some deterministic $\lambda \in [0,C^{-1}]$ to be determined, we fix the random vector
\begin{equation*}  y = \lambda \frac{\Ll(\nabla\tilde\cF_N - \nabla \cF_N\Rr)(t_N,x_N)}{\Ll|\nabla\tilde\cF_N - \nabla \cF_N\Rr|(t_N,x_N)},
\end{equation*}
to get
\begin{equation*}  \lambda \Ll|\nabla\tilde\cF_N - \nabla \cF_N\Rr|(t_N,x_N)  \le 2 \sup_B \Ll|\tilde\cF_N - \cF_N\Rr|  + C \lambda^2 .
\end{equation*}
By the standard concentration result (e.g.\ \cite[Theorem~1.2]{pan}) and an $\eps$-net to cover $B$, we can see $\lim_{\N\to\infty} \delta_N =0$. Taking the expectation in the above display and choosing $\lambda = \delta_N^\frac{1}{2}$, we obtain \eqref{e.concentration}. 

Since~\eqref{e.hessian.est} implies $|\Delta \cF_N(t_N,x_N)|\leq C$, using~\eqref{e.approx_hj}, \eqref{e.gradient.equalities}, and~\eqref{e.concentration}, we arrive at
\begin{align*}
    \Ll|\partial_t\phi - G(\nabla\phi)\Rr|(t_N,x_N)\leq CN^{-\frac{1}{2}} + C\delta^\frac{1}{2}_N.
\end{align*}
Sending $N\to\infty$ and using the convergence of $(t_N,x_N)$ to $(t,x)$, we get~\eqref{e.eq.phi}. As we explained previously, this completes the proof.
\end{proof}

Since $\sP^h$ is convex, the solution $f$ admits a variational representation.

\begin{proposition}[Hopf formula]\label{p.hopf}
At every $(t,x)\in\R_+\times \R^d$,
\begin{align*}
    \lim_{N\to\infty}\cF_N(t,x) = \sup_{z\in\R^d}\inf_{y\in\R^d}\Ll\{\sP^h(y)+z\cdot(x-y)+tG(z)\Rr\}.
\end{align*}
\end{proposition}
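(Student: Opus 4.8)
The plan is to invoke the standard Hopf formula for Hamilton--Jacobi equations with convex initial data, using the fact established in Lemma~\ref{l.FN_cvg_f} that $\lim_{N\to\infty}\cF_N = f$ where $f$ is the unique viscosity solution of $\partial_t f - G(\nabla f) = 0$ on $(0,\infty)\times\R^d$ with $f(0,\cdot) = \sP^h$, and that $\sP^h$ is convex, Lipschitz, and $C^1$ by Lemma~\ref{l.sP}. The Hopf formula (see e.g.\ \cite[Chapter~3]{HJbook}, or the classical references on first-order Hamilton--Jacobi equations) states that for a convex Lipschitz initial condition $g = \sP^h$ and a continuous Hamiltonian $\mathsf{H}(p) = -G(p)$, the function
\begin{align*}
    (t,x) \mapsto \sup_{z\in\R^d}\inf_{y\in\R^d}\Ll\{\sP^h(y) + z\cdot(x-y) + tG(z)\Rr\} = \sup_{z\in\R^d}\Ll\{ x\cdot z + tG(z) - (\sP^h)^*(z)\Rr\}
\end{align*}
is precisely the (unique) viscosity solution of~\eqref{e.hj}, where $(\sP^h)^*$ denotes the convex conjugate of $\sP^h$. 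Hence it must coincide with $f$, which by Lemma~\ref{l.FN_cvg_f} is $\lim_{N\to\infty}\cF_N$. Matching this at the point $(t,x)$ gives the claimed identity.

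Concretely, the steps I would carry out are: (i) recall from Lemma~\ref{l.FN_cvg_f} that $\cF_N \to f$ locally uniformly with $f$ the viscosity solution with initial data $\sP^h$; (ii) recall from Lemma~\ref{l.sP} that $\sP^h$ is convex and Lipschitz (these are exactly the hypotheses under which the Hopf formula applies); (iii) cite the Hopf representation formula for viscosity solutions of $\partial_t f - G(\nabla f) = 0$ with convex Lipschitz initial data --- for instance the version stated in \cite[Chapter~3]{HJbook} --- which asserts that the right-hand side of the displayed equation is the viscosity solution; (iv) conclude by uniqueness of the viscosity solution (\cite[Corollary~3.7]{HJbook}) that $f(t,x)$ equals that supremum--infimum expression, and therefore so does $\lim_{N\to\infty}\cF_N(t,x)$.

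One technical point to verify before invoking the Hopf formula in the form stated is the sign and variable conventions: with the convention $\partial_t f - G(\nabla f) = 0$, the Hamiltonian is $\mathsf{H}(p) = -G(p)$, and the Hopf formula reads $f(t,x) = \sup_z \{ x\cdot z - (\sP^h)^*(z) - t\mathsf{H}(z)\} = \sup_z\{x\cdot z - (\sP^h)^*(z) + tG(z)\}$; unfolding $(\sP^h)^*(z) = \sup_y\{z\cdot y - \sP^h(y)\}$ and noting that $-\sup_y = \inf_y(-\cdot)$ turns the formula into $\sup_z\inf_y\{\sP^h(y) + z\cdot(x-y) + tG(z)\}$, which matches the statement exactly. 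I would also note that no convexity or even continuity of $G$ beyond what is assumed in \ref{i.h_G} is needed for the Hopf formula, since convexity of the initial datum suffices (this is the dual situation to the Hopf--Lax formula, which needs a convex Hamiltonian instead).

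I do not anticipate a genuine obstacle here: the proposition is essentially a one-line deduction from Lemma~\ref{l.FN_cvg_f}, Lemma~\ref{l.sP}, and a citation to the Hopf formula. The only thing requiring a modicum of care is making sure the cited form of the Hopf formula in \cite{HJbook} is stated for exactly the sign convention and regularity class ($f$ convex and Lipschitz initial data, $G$ merely continuous) used here, and otherwise to reproduce the short argument that the sup--inf expression is a viscosity solution (check it is Lipschitz, that it attains the initial data as $t\to 0$ using $C^1$-ness of $\sP^h$, and that it satisfies the sub/super-solution inequalities at points of differentiability) and invoke uniqueness. If \cite{HJbook} already contains this statement in the needed generality, the proof reduces to a single sentence plus the sign-convention check above.
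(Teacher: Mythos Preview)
Your proposal is correct and follows essentially the same approach as the paper: invoke Lemma~\ref{l.FN_cvg_f} for convergence to the viscosity solution $f$, use Lemma~\ref{l.sP} for the convexity of the initial datum $\sP^h$, and then cite the Hopf formula (the paper cites \cite{bardi1984hopf,lions1986hopf} and specifically \cite[Theorem~3.13]{HJbook}). Your sign-convention check and the remark that only convexity of the initial datum (not of $G$) is needed are accurate and slightly more detailed than the paper's two-sentence proof.
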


\begin{proof}
Let $f$ be the limit of $\cF_N$ given by Lemma~\ref{l.FN_cvg_f}. The convexity of $\sP^h$ proved in Lemma~\ref{l.sP} gives the convexity of $f(0,\cdot)$. This allows us to represent $f$ in terms of the Hopf formula \cite{bardi1984hopf,lions1986hopf}. We refer to \cite[Theorem~3.13]{HJbook} for the version needed here.
\end{proof}

\subsection{Proof of main results}

Now, we are ready to prove the main theorem and its corollaries from Section~\ref{s.intro}

\begin{proof}[Proof of Theorem~\ref{t.F^soc,G_N}]
Comparing $F_N^{\soc,G}$ in~\eqref{e.F^soc,G_N=} and $\cF_N$ in~\eqref{e.cF_N=}, we have $F_N^{\soc,G}= \cF_N(1,0)$.
Then, the convergence in~\eqref{e.limF^soc,G_N=} follows from Proposition~\ref{p.hopf} for $(t,x)=(1,0)$ and the definition of $\sP^h$ in~\eqref{e.sP^h=}.
\end{proof}

\begin{proof}[Proof of Corollary~\ref{c.F^G_N}]
Recall the function $\bfs$ defined in~\eqref{e.bfs} and that we can identify $\S^\D$ with $\R^{\D(\D+1)/2}$ as described above~\eqref{e.bfs}. Write $X=\S^\D\times \R^d$ which is isometric to $\R^{\frac{\D(\D+1)}{2}+d}$. Consider the function $\bar h:\R^\D\to X$ given by $\bar h:\:\tau\mapsto (\bfs(\tau),h(\tau))$ and the function $\bar G:X\to \R$ given by $\bar G:  (z,m)\mapsto  \frac{1}{2}\xi(z)+G(m)$. Then, we can see that $F^G_N$ in~\eqref{e.F^G_N=} is equal to $F^{\soc,G}_N$ in~\eqref{e.F^soc,G_N=} with $\R^d$, $h$, $G$ therein substituted with $X$, $\bar h$, $\bar G$. Hence, the convergence in~\eqref{e.limF^G_N=} is given by that in~\eqref{e.limF^soc,G_N=} with this substitution.
\end{proof}

\begin{proof}[Proof of Corollary~\ref{c.F^G_N_so}]
Under the assumption $h=\bfs$, we can see that $F^G_N$ in~\eqref{e.F^G_N=} is equal to $F^{\soc,G}_N$ in~\eqref{e.F^soc,G_N=} with $\R^d$, $h$, $G$ therein substituted with $\S^\D$, $\bfs$, $\frac{1}{2}\xi+G$.
Then,~\eqref{e.limF^G_N=so} follows from~\eqref{e.limF^soc,G_N=}. Lastly, $F_N$ in~\eqref{e.F_N=} is equal to $F^G_N$ with $G=0$ and thus~\eqref{e.limF_N=} easily follows from~\eqref{e.limF^G_N=so}.
\end{proof}

\section{Approach via constrained free energy}\label{s.alt_proof}

In a standard vector spin glass model without the self-overlap correction, it is necessary to consider the self-overlap as a conventional order parameter. These models have been rigorously studied by Panchenko in \cite{pan05,pan.potts,pan.vec}. Aside from the machinery already needed for the SK model as in \cite{pan}, the additional strategy is to consider free energy with a constraint on the self-overlap and then argue along the lines of the large deviations theory. 
One can rework these arguments on the free energy with a constraint on a general conventional order parameter (e.g.\ the limit of $m_N$ in~\eqref{e.m=}) to prove Theorem~\ref{t.F^soc,G_N} and its corollaries. Compared to the PDE approach presented in Section~\ref{s.proofs}, one needs to modify these arguments from the very beginning so that the full presentation can be lengthy.

When the conventional order parameter is solely the self-overlap, the modification is minimal. We choose to present the approach in this special case by proving Theorem~\ref{t.constraint_approach} below using results from~\cite{pan.vec}. Results in~\cite{pan.vec} only allow us to prove the theorem under a stronger assumption that $\xi$ is convex over $\R^{\D\times\D}$ instead of $\S^\D_+$ in~\ref{i.xi_convex}. We will explain where the stronger convexity is needed. As a substitute, we use Lemma~\ref{l.limcF_N(0,x)=}.

Recall the function $\bfs$ in~\eqref{e.bfs}.
Let
\begin{align}\label{e.cK=}
    \text{$\cK$ be the closed convex hull of $\Ll\{\tau\tau^\intercal:\: \tau \in \supp P_1\Rr\}$. }
\end{align}
The following theorem slightly refines Corollary~\ref{c.F^G_N_so} as $\sup_z$ is now taken over $\cK$ instead of $\S^\D$.
\begin{theorem}\label{t.constraint_approach}
Under conditions~\ref{i.P_1}--\ref{i.h_G} and an additional assumption that $h=\bfs$ (identifying $\S^\D$ with $\R^{\D(\D+1)/2}$ isometrically), the limit of $F^{G}_N$ in~\eqref{e.F^G_N=} is given by
\begin{align}\label{e.limF^G_N=so_K}
    \lim_{N\to\infty} F^G_N = \sup_{z\in\cK}\inf_{\pi\in\Pi}\inf_{x\in \S^\D}\Ll\{\sP^{\bfs}(\pi,x)-z\cdot x +\frac{1}{2}\xi(z)+G(z)\Rr\}.
\end{align}
\end{theorem}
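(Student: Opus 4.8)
The plan is to decompose the partition function defining $F^G_N$ according to the value of the self-overlap $\frac{\sigma\sigma^\intercal}{N}$, thereby reducing the statement to a two-sided estimate on a \emph{constrained} free energy. For the upper side I will use Lemma~\ref{l.limcF_N(0,x)=} --- this is exactly the point at which \cite{pan.vec} would instead invoke Guerra's interpolation together with the stronger convexity of $\xi$ on all of $\R^{\D\times\D}$ --- and for the lower side the Aizenman--Sims--Starr cavity computation of \cite{pan.vec}, which requires no convexity.

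First, since $h=\bfs$ the magnetization is $m_N=\frac1N\sum_i\sigma_i\sigma_i^\intercal=\frac{\sigma\sigma^\intercal}{N}$, which lies $P_N$-a.s.\ in the compact convex set $\cK$ from~\eqref{e.cK=}. For $z\in\cK$ and $\eps>0$, put
\begin{align*}
F_N^z(\eps)=\frac1N\E\log\int_{\{|\frac{\sigma\sigma^\intercal}{N}-z|\le\eps\}}\exp\Ll(H_N(\sigma)\Rr)\,\d P_N(\sigma),
\end{align*}
which is well defined for large $N$ because $P_N\big(|\tfrac{\sigma\sigma^\intercal}{N}-z|\le\eps\big)>0$ (write $z$ as a convex combination of at most $\tfrac{\D(\D+1)}{2}+1$ matrices $\tau\tau^\intercal$ with $\tau\in\supp P_1$, by Carath\'eodory). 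Cover $\cK$ by finitely many balls $B(z_1,\eps),\dots,B(z_{K_\eps},\eps)$ with $z_j\in\cK$, always including in the net a maximizer of the right-hand side of~\eqref{e.limF^G_N=so_K}; such a maximizer exists because, for $z\in\cK$, that right-hand side equals $\inf_{x\in\S^\D}\{\sP^{\bfs}(x)-z\cdot x\}+\tfrac12\xi(z)+G(z)$ with $\sP^{\bfs}$ as in~\eqref{e.sP^h=}, an upper semicontinuous function of $z$ on the compact $\cK$. Splitting $\int=\sum_j\int_{B(z_j,\eps)}$, using $\tfrac1N\log\sum_{j\le K_\eps}a_j=\max_j\tfrac1N\log a_j+O(N^{-1}\log K_\eps)$, the local Lipschitzness of $G$ on a neighbourhood of $\cK$, and Gaussian concentration of the free energy (as in \cite[Theorem~1.2]{pan}) to replace $\E\max_j$ by $\max_j\E$ --- the number of balls being fixed in $N$ --- one gets
\begin{align}\label{e.plan.decomp}
F^G_N=\max_{1\le j\le K_\eps}\Ll\{F_N^{z_j}(\eps)+G(z_j)\Rr\}+O(\eps)+o_N(1).
\end{align}

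Next, the constrained free energy is controlled on both sides. With $h=\bfs$, the exponent of $\tilde\cF_N(0,x)$ in~\eqref{e.cF_N=} is $H_N-\tfrac N2\xi(\tfrac{\sigma\sigma^\intercal}{N})+Nx\cdot\tfrac{\sigma\sigma^\intercal}{N}$; retaining only the $j$-th bucket in $\cF_N(0,x)$ and bounding $-\tfrac12\xi(\tfrac{\sigma\sigma^\intercal}{N})+x\cdot\tfrac{\sigma\sigma^\intercal}{N}\ge-\tfrac12\xi(z_j)+x\cdot z_j-C(1+|x|)\eps$ on $B(z_j,\eps)$ gives $\cF_N(0,x)\ge F_N^{z_j}(\eps)-\tfrac12\xi(z_j)+x\cdot z_j-C(1+|x|)\eps$. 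Sending $N\to\infty$ and using Lemma~\ref{l.limcF_N(0,x)=} ($\cF_N(0,x)\to\sP^{\bfs}(x)=\inf_\pi\sP^{\bfs}(\pi,x)$) yields, for every $x\in\S^\D$, $\limsup_N F_N^{z_j}(\eps)\le\inf_\pi\sP^{\bfs}(\pi,x)-x\cdot z_j+\tfrac12\xi(z_j)+C(1+|x|)\eps$. On the other hand, the Aizenman--Sims--Starr cavity scheme of \cite{pan.vec} for the self-overlap-constrained free energy --- together with the Ghirlanda--Guerra identities and Panchenko's synchronization of the overlap array, none of which uses convexity of $\xi$ --- gives $\liminf_N F_N^z(\eps)\ge\inf_{\pi\in\Pi}\inf_{x\in\S^\D}\{\sP^{\bfs}(\pi,x)-x\cdot z+\tfrac12\xi(z)\}-\omega(\eps)$ for all $z\in\cK$, with $\omega(\eps)\to0$ as $\eps\to0$. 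Combining the two, $\lim_N F_N^z(\eps)$ equals $\inf_{\pi,x}\{\sP^{\bfs}(\pi,x)-x\cdot z+\tfrac12\xi(z)\}$ up to $O(\eps)+\omega(\eps)$ for $z\in\cK$ (to pass to $\inf_x$ in the upper bound I truncate to $|x|\le R$ and let $R\to\infty$ afterwards, legitimate since $\sP^{\bfs}$ grows at infinity like the support function of $\cK$, so that $\inf_x\{\sP^{\bfs}(x)-z\cdot x\}$ is finite precisely for $z\in\cK$). Plugging this into~\eqref{e.plan.decomp}, letting $N\to\infty$ and then $\eps\to0$ --- keeping a maximizer in the net for the lower direction, any net for the upper --- produces exactly~\eqref{e.limF^G_N=so_K}, since $\tfrac12\xi(z)+G(z)$ is independent of $(\pi,x)$.

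The substantive obstacle is the passage through \cite{pan.vec}: one must pinpoint where their proof of the Parisi formula uses convexity of $\xi$ on $\R^{\D\times\D}$ rather than merely on $\S^\D_+$ --- this is in the Guerra-type upper bound --- and verify that the cavity lower bound, the Ghirlanda--Guerra identities, and the synchronization argument all survive under~\ref{i.xi_convex}; Lemma~\ref{l.limcF_N(0,x)=} then supplies the upper bound that convexity would have produced. The remaining ingredients --- the concentration and union-bound reduction~\eqref{e.plan.decomp}, the various interchanges of $\sup$, $\inf$ and limits, and the behaviour of $\inf_x\{\sP^{\bfs}(x)-z\cdot x\}$ as $z\to\partial\cK$ --- are routine but should be written out with some care.
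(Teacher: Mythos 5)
Your proposal follows essentially the same route as the paper's proof: cover the compact set $\cK$ by finitely many balls and work with self-overlap-constrained free energies, take the lower bound from Panchenko's constrained result in \cite{pan.vec} (whose cavity/synchronization part needs no extra convexity), obtain the matching upper bound by restricting $\cF_N(0,x)$ to a ball and invoking Lemma~\ref{l.limcF_N(0,x)=} in place of Guerra's interpolation, and then combine via Lipschitzness of $G$, a union bound over the net, and Gaussian concentration. The only deviations are organizational (the paper fixes near-optimal $(\pi,x)$ first and then chooses point-dependent radii $\eps_z$, which avoids your truncation to $|x|\le R$ and the growth claim for $\sP^{\bfs}$), so the argument is correct and matches the paper's.
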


The remainder of this section is devoted to the proof of this theorem. Notice that, when $h=\bfs$, $m_N$ in~\eqref{e.m=} becomes the self-overlap, namely,
\begin{align}\label{e.m=so}
    m_N = \frac{\sigma\sigma^\intercal}{N}.
\end{align}
For $N\in\N$ and a measurable subset $S\subset \R^{\D\times N}$, we consider
\begin{align}\label{e.F_N^G(S)=}
    F_N^G(S) = \frac{1}{N}\E\log \int_S\exp\Ll(H_N(\sigma) + NG\Ll(m_N\Rr)\Rr)\d P_N(\sigma),
\end{align}
which is a constrained version of $F^G_N$ in~\eqref{e.F^G_N=} with $h$ therein equal to $\bfs$ in~\eqref{e.bfs}.
We set $F^0_N(S)$ to be the constrained free energy with $G$ in~\eqref{e.F_N^G(S)=} set to be zero. Notice that $F_N^G$ in~\eqref{e.F^G_N=} is now equal to $F_N^G\Ll(\R^{\D\times N}\Rr)$.
For every $z\in \S^\D$ and $\eps>0$, we define
\begin{align}\label{e.Sigma(z)}
    \Sigma_\eps(z) = \Ll\{\sigma\in \R^{\D\times N}:\: \Ll|m_N-z \Rr|\leq \eps\Rr\}.
\end{align}
Recall the definition of $\sP^\bfs(y)$ in~\eqref{e.sP^h=} with $\bfs$ substituted for $h$.
We start with a result on the limit of the constrained free energy.

\begin{proposition}\label{p.lim_const_F_N}
Under conditions~\ref{i.P_1}--\ref{i.h_G} and an additional assumption that $h=\bfs$, it holds for every $z\in\cK$ that
\begin{align}\label{e.liminf=limsup}
    \lim_{\eps\downarrow0}\liminf_{N\to\infty} F^0_N\Ll(\Sigma_\eps(z)\Rr) = \lim_{\eps\downarrow0}\limsup_{N\to\infty} F^0_N\Ll(\Sigma_\eps(z)\Rr) = \inf_{y\in \S^\D} \Ll\{\sP^\bfs(y)-y\cdot z+\frac{1}{2}\xi(z)\Rr\}.
\end{align}
\end{proposition}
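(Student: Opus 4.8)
The plan is to relate the constrained free energy $F^0_N(\Sigma_\eps(z))$ to the unconstrained free energy $\cF_N(0,x)$ with a tilt $x\cdot m_N$ in the exponent, and then use Lemma~\ref{l.limcF_N(0,x)=} together with a Legendre-duality (large-deviations) argument. Concretely, for fixed $x\in\S^\D$ one has the elementary two-sided bound: on the set $\Sigma_\eps(z)$ the term $x\cdot m_N$ differs from $x\cdot z$ by at most $|x|\eps$, so
\begin{align*}
    F^0_N(\Sigma_\eps(z)) \le \cF_N(0,x) - x\cdot z + |x|\eps,
\end{align*}
since dropping the constraint only enlarges the integral and we use $e^{Nx\cdot m_N}\ge e^{N(x\cdot z - |x|\eps)}$ on $\Sigma_\eps(z)$. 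Taking $\limsup_{N\to\infty}$, then $\eps\downarrow 0$, then $\inf_{x\in\S^\D}$, and invoking Lemma~\ref{l.limcF_N(0,x)=} gives
\begin{align*}
    \lim_{\eps\downarrow0}\limsup_{N\to\infty} F^0_N(\Sigma_\eps(z)) \le \inf_{x\in\S^\D}\{\sP^\bfs(x)-x\cdot z\} = \inf_{y\in\S^\D}\{\sP^\bfs(y)-y\cdot z\}.
\end{align*}
Adding the constant $\frac12\xi(z)$ (which I would carry throughout, or absorb into $G\equiv 0$ bookkeeping) yields the upper bound in~\eqref{e.liminf=limsup}. Note this direction does not even use $z\in\cK$.

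For the matching lower bound on the $\liminf$, the idea is the standard exponential-tilt / change-of-measure argument: decompose $\R^{\D\times N}$ according to the value of $m_N$ into finitely many shells $\Sigma_\eps(z_j)$ (possible because $m_N=\frac{\sigma\sigma^\intercal}{N}$ takes values in a bounded set, contained in $\cK$ up to the boundary, by~\ref{i.P_1}), so that
\begin{align*}
    \cF_N(0,x) \le \max_j\left( F^0_N(\Sigma_\eps(z_j)) + x\cdot z_j + |x|\eps \right) + o(1),
\end{align*}
the $o(1)$ accounting for the number of shells. Letting $N\to\infty$, $\eps\downarrow0$, and optimizing, one gets that for every $x$ there is some $z^*$ (a priori depending on $x$ and the subsequence) in $\cK$ with $\sP^\bfs(x) \ge \lim_{\eps}\liminf_N F^0_N(\Sigma_\eps(z^*)) + x\cdot z^*$. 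The task is then to upgrade this to the pointwise statement that for \emph{each} $z\in\cK$ the quantity $\Phi(z):=\lim_{\eps}\liminf_N F^0_N(\Sigma_\eps(z))$ equals $\inf_y\{\sP^\bfs(y)-y\cdot z\}$. Here I would use: (i) the upper bound already proven shows $\Phi(z)\le\inf_y\{\sP^\bfs(y)-y\cdot z\}=:\psi(z)$ for all $z\in\cK$; (ii) concavity of $\Phi$ as a $\liminf$ of concave functions — concavity of $z\mapsto F^0_N(\Sigma_\eps(z))$-type quantities needs a short argument via Hölder/convexity of $\Sigma_\eps$ in $z$, or more robustly one works with the closed concave hull; (iii) $\psi$ is the concave conjugate of the convex function $\sP^\bfs$ (convex by Lemma~\ref{l.sP}), and since $\sP^\bfs$ is finite and continuous on $\R^d$, biconjugation gives $\sP^\bfs = (\psi)^*$ as well. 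A Varadhan/Laplace-type identity $\lim_N \cF_N(0,x) = \sup_{z\in\cK}\{\Phi(z)+x\cdot z\}$ (which follows by combining the shell decomposition with the already-established upper bound per shell) then reads $\sP^\bfs(x) = \sup_{z\in\cK}\{\Phi(z)+x\cdot z\}$; taking conjugates and using (i)–(ii) forces $\Phi = \psi$ on $\cK$. Adding back $\frac12\xi(z)$ completes the proof.

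I expect the main obstacle to be step (ii)–(iii): making the Legendre-duality inversion rigorous, in particular controlling that $\Phi$ is genuinely concave and upper semicontinuous on $\cK$ so that the biconjugation argument applies without gaps, and checking that no mass is lost at the boundary of $\cK$ (the support constraint from~\ref{i.P_1} ensures $m_N\in\cK$, so shells outside a neighborhood of $\cK$ are empty and contribute $-\infty$, which is what makes $\sup$ over $\cK$ rather than all of $\S^\D$ legitimate). A secondary technical point is the concentration of $F^0_N(\Sigma_\eps(z))$ around its mean and the interchange of the $\eps\downarrow0$ and $N\to\infty$ limits, which I would handle exactly as in the analogous arguments of~\cite{pan05,pan.vec} — indeed this is precisely the place where the paper says only ``minimal modifications'' are needed, and where one substitutes Lemma~\ref{l.limcF_N(0,x)=} for the convexity of $\xi$ on all of $\R^{\D\times\D}$.
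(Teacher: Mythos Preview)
Your upper bound is essentially the paper's argument, modulo bookkeeping: the displayed inequality forgets that $\cF_N(0,x)$ carries the self-overlap correction $-\tfrac{N}{2}\xi(m_N)$, which on $\Sigma_\eps(z)$ equals $-\tfrac{N}{2}\xi(z)+O(N\eps)$ and is precisely what produces the $+\tfrac{1}{2}\xi(z)$ in~\eqref{e.liminf=limsup}. With that term tracked, your bound reads $F^0_N(\Sigma_\eps(z)) \le \cF_N(0,x) - x\cdot z + \tfrac{1}{2}\xi(z) + O(\eps)$, and the rest is identical to the paper's proof.

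The lower bound is where your proposal and the paper genuinely diverge, and where your argument has a gap. The paper does not argue by duality at all: it invokes \cite[Theorem~2]{pan.vec}, Panchenko's lower bound for the constrained free energy, whose proof runs through the Aizenman--Sims--Starr cavity scheme and already holds under~\ref{i.xi_convex}. That result gives $\lim_{\eps\downarrow0}\liminf_N F^0_N(\Sigma_\eps(z))\ge \inf_y\inf_{\pi:\pi(1)=z}\{\sP^\bfs(\pi,y)-y\cdot z+\tfrac{1}{2}\xi(z)\}$, and the paper simply drops the constraint $\pi(1)=z$ to land on $\sP^\bfs(y)$. This is a substantive external input, not the ``minimal modification'' you allude to.

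Your Legendre-duality route, as written, does not close. The shell decomposition yields $\sP^\bfs(x)\le \sup_{z\in\cK}\{\bar\Phi(z)-\tfrac{1}{2}\xi(z)+x\cdot z\}$ with $\bar\Phi$ the \emph{limsup} version; one has $\limsup_N\max_j\le\max_j\limsup_N$ but never $\le\max_j\liminf_N$, so the $\liminf$ version $\Phi$ cannot appear on the right. Even granting concavity and upper semicontinuity of $\bar\Phi-\tfrac{1}{2}\xi$ on $\cK$, biconjugation would at best give $\bar\Phi=\psi$, and since you already know $\Phi\le\bar\Phi\le\psi$, the inequality $\Phi\ge\psi$ remains unproved. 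Separately, the concavity you need in step~(ii) is not a ``short H\"older argument'': shells $\Sigma_\eps(z)$ do not combine convexly at fixed $N$, and the relevant superadditivity is across system sizes (cf.\ \cite[Lemma~1]{pan.vec}), which gives concavity only of a limit quantity. A repair via a G\"artner--Ellis type argument---using the differentiability of $\sP^\bfs$ from Lemma~\ref{l.sP} and the concentration of $m_N$ under the tilted Gibbs measure as in Proposition~\ref{p.cvg_m} to localize the tilted free energy on $\Sigma_\eps(\nabla\sP^\bfs(x))$---is plausible for $z$ in the range of $\nabla\sP^\bfs$, but that is not the argument you sketched and would still need work at non-exposed points of $\cK$.
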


We remark that, to our best knowledge, this proposition is new because the convexity of $\xi$ is only assumed to be over $\S^\D_+$ as in~\ref{i.xi_convex}.
Under a stronger assumption that $\xi$ is convex on $\R^{\D\times \D}$, this proposition is direct consequence of \cite[Theorem~2 and Lemma~2]{pan.vec}. The stronger convexity is needed in the proof of the upper bound \cite[Lemma~2]{pan.vec} to use Guerra's interpolation.

\begin{proof}
It is proven in \cite[Theorem~2]{pan.vec} that
\begin{align*}
    \lim_{\eps\downarrow0}\liminf_{N\to\infty} F_N^0\Ll(\Sigma_\eps(z)\Rr) \geq \inf_{y\in \S^\D}\inf_{\pi:\:\pi(1)=z} \Ll\{\sP^\bfs(\pi,y)-y\cdot z+\frac{1}{2}\xi(z)\Rr\}.
\end{align*}
Since $\inf_{\pi:\:\pi(1)=z}\sP^\bfs(\pi,y)\geq \inf_{\pi}\sP^\bfs(\pi,y)=\sP^\bfs(y)$, we get the lower bound for~\eqref{e.liminf=limsup}.
The Parisi functional in \cite[Theorem~2]{pan.vec} is written in a different notation. It is explained in~\cite[Appendix~A]{chen2023on} how to rewrite it in our notation.

Now, we focus on the upper bound. Let $y\in \S^\D$ and recall the expression of $\cF_N(0,y)$ in~\eqref{e.cF_N=}. By the assumption $h=\bfs$ and the consequence~\eqref{e.m=so}, we have
\begin{align*}
    \cF_N(0,y) =  \frac{1}{N} \log \int \exp\Ll(H_N(\sigma) - \frac{N}{2}\xi\Ll(m_N\Rr) +Ny\cdot m_N\Rr)\d P_N(\sigma).
\end{align*}
Then, we can compute
\begin{align*}
    F^0_N\Ll(\Sigma_\eps(z)\Rr) - \cF_N(0,y) + y\cdot z -\frac{1}{2}\xi(z) = \frac{1}{N}\E \log \la \mathds{1}_{\sigma\in \Sigma_\eps(z)} e^{\frac{N}{2}\xi\Ll(m_N\Rr)-Ny\cdot m_N+Ny\cdot z -\frac{N}{2}\xi(z) }\ra
\end{align*}
where $\la \cdot\ra$ is the Gibbs measure naturally associated with $\cF_N(0,y)$. Notice that $m_N\in\cK$. Let $L$ be the Lipschitz coefficient of $\xi$ on $\cK$. Then, we can use the definition of $\Sigma_\eps(z)$ in~\eqref{e.Sigma(z)} to see that the right-hand side in the display is bounded from above by $\frac{1}{2}L\eps + |z|\eps$. This along with Lemma~\ref{l.limcF_N(0,x)=} (with $h=\bfs$) implies that
\begin{align*}
    \limsup_{N\to\infty} F_N^0\Ll(\Sigma_\eps(z)\Rr)\leq \sP^\bfs(y)-y\cdot z + \frac{1}{2}\xi(z)+\frac{1}{2}L\eps+|z|\eps.
\end{align*}
Sending $\eps\downarrow0$, we get the desired upper bound for~\eqref{e.liminf=limsup}.
\end{proof}

\begin{proof}[Proof of Theorem~\ref{t.constraint_approach}]
First, we show that there is a constant $C$ such that, for every $\eps>0$ and $z\in\cK$,
\begin{align}\label{e.F^G-F}
    \Ll|F^G_N\Ll(\Sigma_\eps(z)\Rr) - F^0_N\Ll(\Sigma_\eps(z)\Rr) - G(z)\Rr|\leq C\eps.
\end{align}
To see this, we compute
\begin{align*}
    F^G_N\Ll(\Sigma_\eps(z)\Rr) - F^0_N\Ll(\Sigma_\eps(z)\Rr) - G(z) = \frac{1}{N}\E \log \la e^{NG\Ll(m_N\Rr)- NG(z)} \ra
\end{align*}
where $\la\cdot\ra$ is the Gibbs measure associated with $F^0_N\Ll(\Sigma_\eps(z)\Rr)$. Due to the constraint imposed by $\Sigma_\eps(z)$, we have $\Ll|m_N-z\Rr|\leq \eps$ a.s.\ under $\la\cdot\ra$. Letting $C$ be the Lipschitz coefficient of $G$ on $\cK$, we can obtain~\eqref{e.F^G-F}.

Next, we show the lower bound for~\eqref{e.limF^G_N=so_K}.
For brevity, we write
\begin{align*}
    \mathscr{Q}(z) =\inf_{y\in \S^\D} \Ll\{\sP^\bfs(y)-y\cdot z+\frac{1}{2}\xi(z) +G(z)\Rr\},\quad\forall z\in\cK.
\end{align*}
Also, recall that we can expand $\sP^\bfs(y) = \inf_{\pi\in\Pi}\sP^\bfs(\pi,y)$ as defined in~\eqref{e.sP^h=}.
Since $F^G_N$ in~\eqref{e.F^G_N=} does not have any constraint, it is easy to see $F^G_N\Ll(\Sigma_\eps(z)\Rr)\leq F^G_N$, which along with~\eqref{e.F^G-F} implies 
\begin{align*}
    \liminf_{N\to\infty}F^G_N \geq \liminf_{N\to\infty} F^0_N\Ll(\Sigma_\eps(z)\Rr) +G(z)- C\eps
\end{align*}
for every $z\in \cK$ and $\eps>0$. Sending $\eps\downarrow0$ and using Proposition~\ref{p.lim_const_F_N}, we get
\begin{align*}
    \liminf_{N\to\infty}F^G_N \geq\mathscr{Q}(z).
\end{align*}
Taking the supremum over $z\in\cK$, we obtain the lower bound for~\eqref{e.limF^G_N=so_K}.

The proof for the upper bound is contained in the proof of \cite[Lemma~3]{pan.vec}. For completeness, we present it below.
Temporarily fix any $\delta>0$. By Proposition~\ref{p.lim_const_F_N} and~\eqref{e.F^G-F}, for every $z\in\cK$, there is $\eps_z>0$ such that
\begin{align}\label{e.limsupF^G_N<}
    \limsup_{N\to\infty} F^G_N(\Sigma_{\eps_z}(z))\leq \delta+ \mathscr{Q}(z).
\end{align}
Since $\cK$ is compact, there are an integer $n\in\N$ and $z_1,\dots,z_n\in \cK$ such that $\cK$ is covered by $\eps_{z_i}$-balls centered at $z_i$ for $1\leq i\leq n$. 
Since $P_N$ is a product measure, using the definition of $\Sigma_\eps(z)$ in~\eqref{e.Sigma(z)} and that of $\cK$ in~\eqref{e.cK=}, we have
\begin{align}\label{e.suppP_N_subset}
    \supp P_N\subset \cup_{i=1}^n\Sigma_{\eps_{z_i}}(z_i).
\end{align}

For $N\in\N$ and a measurable subset $S\subset \R^{\D\times N}$, we define
\begin{align*}
    \tilde F_N^G(S) = \frac{1}{N}\log \int_{S} \exp\Ll(H_N(\sigma)+NG\Ll(m_N\Rr)\Rr)\d P_N(\sigma)
\end{align*}
and set $\tilde F^G_N=\tilde F^G_N\Ll(\R^{\D\times N}\Rr)$. So, we have $ F^G_N(S) = \E \tilde F^G_N(S)$ and $F^G_N = \E \tilde F^G_N$. Due to $\tilde F^G_N= \tilde F^G_N\Ll(\supp P_N\Rr)$, we can use~\eqref{e.suppP_N_subset} to get
\begin{align*}
    \tilde F^G_N \leq \frac{1}{N}\log \Ll(n\max_{1\leq i\leq n}e^{N \tilde F^G_N\Ll(\Sigma_{\eps_{z_i}}(z_i)\Rr)}\Rr) \leq N^{-1}\log n + \max_{1\leq i\leq n} \tilde F^G_N\Ll(\Sigma_{\eps_{z_i}}(z_i)\Rr).
\end{align*}
The standard Gaussian concentration inequalities (e.g.\ \cite[Theorem~1.2]{pan}) gives a constant $C$ such that $\E \Ll|\tilde F^G_N(S) - F^G_N(S)\Rr|\leq CN^{-\frac{1}{2}}$ uniformly in $N$ and $S$. Therefore, we get
\begin{align*}
    F^G_N \leq N^{-1}\log n + 2CN^{-\frac{1}{2}}+\max_{1\leq i\leq n}  F^G_N\Ll(\Sigma_{\eps_{z_i}}(z_i)\Rr).
\end{align*}
Sending $N\to\infty$ and using~\eqref{e.limsupF^G_N<}, we obtain
\begin{align*}
    \limsup_{N\to\infty} F^G_N \leq \delta + \sup_{z\in\cK}\mathscr{Q}(z).
\end{align*}
The upper bound for~\eqref{e.limF^G_N=so_K} follows by taking $\delta\to0$.
\end{proof}

\appendix

\section{Convergence of \texorpdfstring{$m_N$}{the mean magnetization}}\label{s.cvg_m}
As mentioned in Remark~\ref{r.m_N}, we can show that $m_N$ in~\eqref{e.m=} always converges under the Gibbs measure associated with $F^\soc_N$ in~\eqref{e.F^soc_N=}. When $h=\bfs$, we have that $m_N = \frac{\sigma\sigma^\intercal}{N}$ is the self-overlap and such a result has been proved in~\cite[Theorem~1.1 (1) and (2)]{chen2023on}. A straightforward modification gives the desired result below.

\begin{proposition}\label{p.cvg_m}
Under conditions~\ref{i.P_1}--\ref{i.xi_convex}, if $h$ is bounded and measurable, then $m_N$ in~\eqref{e.m=} satisfies
\begin{align*}
    \lim_{N\to\infty} \E \la\Ll|m_N - \nabla \sP^h(0)\Rr| \ra = 0
\end{align*}
where $\sP^h$ is defined in~\eqref{e.sP^h=} and $\la\cdot\ra$ is the Gibbs measure associated with $F^\soc_N$ in~\eqref{e.F^soc_N=}.
\end{proposition}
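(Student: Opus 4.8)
The plan is to exploit the same PDE/convexity machinery used earlier, applied now to the one-parameter family indexed by an external field conjugate to $m_N$, and to extract convergence of $m_N$ from differentiability of the limiting free energy together with Gaussian concentration. Concretely, for $x\in\R^d$ let $\la\cdot\ra_x$ be the Gibbs measure associated with $\cF_N(0,x)$ (the free energy with self-overlap correction and an added linear term $Nx\cdot m_N$, which is exactly $F^\soc_N$ when $x=0$). From~\eqref{e.dF_N(x+ry)} we have $\nabla\cF_N(0,x)=\E\la m_N\ra_x$, and from~\eqref{e.2nd_der_cF} the map $x\mapsto\cF_N(0,x)$ is convex with a Hessian controlling the Gibbs variance of directional projections of $m_N$. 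By Lemma~\ref{l.limcF_N(0,x)=}, $\cF_N(0,\cdot)\to\sP^h$ pointwise, and by Lemma~\ref{l.sP} the limit $\sP^h$ is convex and \emph{continuously differentiable}.

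The key steps, in order, are as follows. First I would upgrade the pointwise convergence of the convex functions $\cF_N(0,\cdot)$ to local uniform convergence (automatic for convex functions on $\R^d$) and then to convergence of gradients: at any point of differentiability of the limit, $\nabla\cF_N(0,x)\to\nabla\sP^h(x)$; since $\sP^h$ is $C^1$ everywhere by Lemma~\ref{l.sP}, this holds at every $x$, in particular $\nabla\cF_N(0,0)=\E\la m_N\ra_0\to\nabla\sP^h(0)$. Second, I would show the Gibbs fluctuations of $m_N$ around its mean vanish: integrate the second-derivative identity~\eqref{e.2nd_der_cF} in $x$ over a small ball and use the uniform bound~\eqref{e.|m|<} on $|m_N|$ to conclude $\int_{B} N\,\E\la|m_N-\la m_N\ra_x|^2\ra_x\,\d x$ stays bounded; a Fubini/averaging argument then gives a sequence $x_N\to 0$ along which $N\,\E\la|m_N-\la m_N\ra_{x_N}|^2\ra_{x_N}\to 0$, and combined with Gaussian concentration of $\la m_N\ra_x$ around $\E\la m_N\ra_x$ (uniformly in $x$, e.g.\ \cite[Theorem~1.2]{pan}), this yields $\E\la|m_N-\nabla\sP^h(x_N)|\ra_{x_N}\to 0$. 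Third, I would transfer this from the perturbed measure $\la\cdot\ra_{x_N}$ back to $\la\cdot\ra_0$: since the Radon--Nikodym derivative between the two Gibbs measures is $e^{Nx_N\cdot m_N}$ normalized, and $|m_N|$ is bounded while $x_N\to 0$, a change-of-measure estimate controls $\E\la|m_N-\nabla\sP^h(0)|\ra_0$ in terms of the quantity just bounded plus an error $o(1)$ coming from $|x_N|\to 0$ and continuity of $\nabla\sP^h$ at $0$. This is exactly the scheme of \cite[Theorem~1.1]{chen2023on} for $h=\bfs$, and the only modification is notational—replacing $\bfs$ by a general bounded measurable $h$, which does not affect any of the convexity or concentration inputs.

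Alternatively, and perhaps more cleanly, I would follow the differentiation-of-free-energy route directly: for the limit $f(0,x)=\sP^h(x)$, the derivative $\nabla\sP^h(x)$ equals $\lim_N\E\la m_N\ra_x$ at every $x$ by $C^1$-ness; the self-averaging of $m_N$ (not merely of its mean) then follows from the standard spin-glass argument that convexity plus vanishing boundary terms forces $\E\la|m_N-\E\la m_N\ra_x|\ra_x\to 0$ for a.e.\ $x$, and $C^1$-ness of the limit promotes ``a.e.\ $x$'' to ``every $x$'', in particular $x=0$. I expect the main obstacle to be the transfer step (step three): one must ensure the tilt $x_N\to0$ chosen for the fluctuation bound can be taken genuinely along a sequence converging to $0$ while \emph{simultaneously} serving the averaging argument, and then quantitatively pass back to $\la\cdot\ra_0$ without the change of measure $e^{Nx_N\cdot m_N}$ blowing up; the uniform bound $|m_N|\le C$ makes the exponent $O(N|x_N|)$, so one needs $|x_N|$ chosen to decay while still lying in the good set from the averaging step—a compactness/diagonal argument handles this, but it is the delicate point. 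Everything else (convexity, gradient convergence, Gaussian concentration) is standard and already available in the forms cited above.
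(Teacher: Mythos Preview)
Your overall decomposition---convergence of the mean $\E\la m_N\ra_0\to\nabla\sP^h(0)$ via gradient convergence of convex functions, disorder concentration of $\la m_N\ra_0$ via Gaussian concentration plus convexity, and Gibbs concentration via second-derivative/variance bounds---matches the paper's structure. The gap is in how you obtain Gibbs concentration \emph{at} $x=0$. In your step two, averaging over a ball of radius $r_N$ produces $x_N$ with $\E\la|m_N-\la m_N\ra_{x_N}|^2\ra_{x_N}\le C/(r_N N)$; for this to vanish with $x_N\to 0$ you need $r_N\to 0$ and $r_N N\to\infty$. But in step three the Radon--Nikodym density between $\la\cdot\ra_0$ and $\la\cdot\ra_{x_N}$ is only controlled by $e^{O(N|x_N|)}\le e^{O(Nr_N)}$, which stays bounded only if $Nr_N=O(1)$. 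These constraints are incompatible: with $Nr_N$ bounded the variance bound does not vanish, and with $Nr_N\to\infty$ the tilt factor swamps whatever smallness was gained. No compactness or diagonal argument resolves this quantitative obstruction. Your alternative formulation (``$C^1$-ness promotes a.e.\ $x$ to every $x$'') hides the same difficulty: there is no soft continuity step that transfers concentration from nearby $x$ to $x=0$ without paying the exponential price.

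The paper sidesteps any change of measure. Writing $g(\sigma)=Ny\cdot m_N$, it uses the elementary identity
\[
r\,\E\la|g(\sigma^1)-g(\sigma^2)|\ra_0 = \int_0^r\E\la|g(\sigma^1)-g(\sigma^2)|\ra_{sy}\,\d s - \int_0^r\!\!\int_0^t\frac{\d}{\d s}\E\la|g(\sigma^1)-g(\sigma^2)|\ra_{sy}\,\d s\,\d t
\]
together with the lower bound $\frac{\d}{\d s}\E\la|g(\sigma^1)-g(\sigma^2)|\ra_{sy}\ge -8\,\E\la(g-\la g\ra_{sy})^2\ra_{sy}$ to control the fluctuation at $s=0$ by $2\sqrt{\eps_N/(rN)}+8\eps_N$, where $\eps_N=\tfrac{1}{N}\int_0^r\E\la(g-\la g\ra_{sy})^2\ra_{sy}\,\d s=y\cdot\nabla\cF_N(0,ry)-y\cdot\nabla\cF_N(0,0)$. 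The point is that this \emph{integrated} variance equals a difference of first derivatives and hence, via convexity and the pointwise convergence $\cF_N(0,\cdot)\to\sP^h$, is bounded by difference quotients of $\sP^h$; sending $N\to\infty$ and then $r\to 0$, the $C^1$-ness of $\sP^h$ kills it. This integration-by-parts device is precisely the missing idea that lets one work directly at $x=0$.
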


For completeness, we present the proof, which follows from the straightforward combination of the next two lemmas. We assume~\ref{i.P_1}--\ref{i.xi_convex} henceforth.

\begin{lemma}
Let $\la\cdot\ra$ be associated with $F^\soc_N$.
If $h$ is bounded and measurable, then
\begin{align*}
    \lim_{N\to\infty}\E \la m_N\ra  = \nabla \sP^h(0).
\end{align*}
\end{lemma}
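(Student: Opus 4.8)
The plan is to identify $\E\la m_N\ra$ as a gradient of a free energy functional and then pass to the limit using the convex analysis already set up. Concretely, recall from Lemma~\ref{l.limcF_N(0,x)=} that $\cF_N(0,x)\to\sP^h(x)$ pointwise, and recall from~\eqref{e.1st_der_FN} (specialized to $t=0$) that $\nabla\cF_N(0,x)=\E\la m_N\ra_x$, where $\la\cdot\ra_x$ is the Gibbs measure associated with $\cF_N(0,x)$; in particular $\la\cdot\ra_0$ is exactly the Gibbs measure associated with $F^\soc_N$ since $\cF_N(0,0)=F^\soc_N$ (compare~\eqref{e.F^soc_N=} and~\eqref{e.cF_N=}). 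So the claim is precisely that $\nabla\cF_N(0,0)\to\nabla\sP^h(0)$.

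First I would invoke the general principle that pointwise convergence of convex functions upgrades to convergence of gradients at any point where the limit is differentiable. By Lemma~\ref{l.sP}, $\sP^h$ is convex and continuously differentiable, so $\nabla\sP^h(0)$ is well defined; by Lemma~\ref{l.cF_N_properties} (or directly via~\eqref{e.2nd_der_cF}) each $\cF_N(0,\cdot)$ is convex, and by~\eqref{e.|m|<} and~\eqref{e.1st_der_FN} the gradients $\nabla\cF_N(0,\cdot)=\E\la m_N\ra_\cdot$ are uniformly bounded. Then for any unit vector $y$ and any $r>0$, convexity gives
\begin{align*}
\frac{\cF_N(0,0)-\cF_N(0,-ry)}{r}\leq y\cdot\nabla\cF_N(0,0)\leq \frac{\cF_N(0,ry)-\cF_N(0,0)}{r}.
\end{align*}
Sending $N\to\infty$ using Lemma~\ref{l.limcF_N(0,x)=}, every subsequential limit $a$ of $\nabla\cF_N(0,0)$ (which exists by the uniform bound) satisfies $\frac{\sP^h(0)-\sP^h(-ry)}{r}\leq y\cdot a\leq \frac{\sP^h(ry)-\sP^h(0)}{r}$; letting $r\downarrow0$ and using differentiability of $\sP^h$ at $0$ forces $y\cdot a=y\cdot\nabla\sP^h(0)$ for all $y$, hence $a=\nabla\sP^h(0)$. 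Since every subsequential limit equals $\nabla\sP^h(0)$, the full sequence converges, which is the desired statement $\lim_N\E\la m_N\ra=\nabla\sP^h(0)$.

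I expect no serious obstacle here: the only subtlety is justifying interchange of limit and gradient, and that is handled cleanly by the convexity-plus-uniform-Lipschitz argument above (this is the same mechanism used in the differentiability proof of Lemma~\ref{l.sP}). One should just be careful that Lemma~\ref{l.limcF_N(0,x)=} provides convergence at \emph{every} $x\in\R^d$, not merely at $x=0$, which is exactly what the one-sided difference quotients require; this is indeed what that lemma states, via the reweighting $\d P_1\mapsto e^{x\cdot h(\tau)}\d P_1$ preserving~\ref{i.P_1}.
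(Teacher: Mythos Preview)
Your proof is correct and follows essentially the same approach as the paper: both identify $\E\la m_N\ra=\nabla\cF_N(0,0)$, sandwich $y\cdot\nabla\cF_N(0,0)$ between one-sided difference quotients of $\cF_N(0,\cdot)$ by convexity, pass to the limit via Lemma~\ref{l.limcF_N(0,x)=}, and then let $r\downarrow0$ using the differentiability of $\sP^h$ from Lemma~\ref{l.sP}. The only cosmetic difference is that you phrase the $N\to\infty$ step via subsequential limits (using the uniform bound~\eqref{e.|m|<}), whereas the paper simply takes $\liminf$/$\limsup$ directly in the sandwich inequality.
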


\begin{proof}
Recall $\cF_N$ defined in~\eqref{e.cF_N=}.
Let $y\in\R^d$ and $r>0$.
The convexity of $\cF_N$ by Lemma~\ref{l.cF_N_properties} implies
\begin{align*}
    \frac{\cF_N(0,0)-\cF_N(0,-ry)}{r} \leq y\cdot\nabla \cF_N(0,0) \leq \frac{\cF_N(0,ry)-\cF_N(0,0)}{r}.
\end{align*}
Sending $N\to\infty$ and then $r\to0$, and using Lemma~\ref{l.limcF_N(0,x)=} and the differentiability of $\sP^h$ in Lemma~\ref{l.sP}, we get
\begin{align}\label{e.cvgEnablaF_N}
    \lim_{N\to\infty} y\cdot\nabla \cF_N(0,0) = y\cdot\nabla \sP^h(0).
\end{align}
Varying $y$, we get $\lim_{N\to\infty} \nabla \cF_N(0,0) = \nabla \sP^h(0)$ in $\R^d$. Recall from~\eqref{e.1st_der_FN} that $\nabla \cF_N(0,0) = \E \la m_N\ra$ where $\la\cdot\ra$ is associated with $\cF_N(0,0)$. The desired result follows from the observation that $\cF_N(0,0)=F^\soc_N$.
\end{proof}

\begin{lemma}
Let $\la\cdot\ra$ be associated with $F^\soc_N$.
If $h$ is bounded and measurable, then
\begin{align*}
    \lim_{N\to\infty}\E\la\Ll|m_N - \E \la m_N\ra\Rr| \ra.
\end{align*}
\end{lemma}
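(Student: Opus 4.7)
The plan is to decompose the target quantity via the triangle inequality as
\[
\E\la|m_N - \E\la m_N\ra|\ra \leq \E\la|m_N - \la m_N\ra|\ra + \E|\la m_N\ra - \E\la m_N\ra|,
\]
and then show each summand vanishes. The second is a disorder fluctuation of the Gibbs mean. Since $\la m_N\ra = \nabla\tilde\cF_N(0,0)$ and $\E\la m_N\ra = \nabla\cF_N(0,0)$, I would combine the convexity of $\tilde\cF_N$ from Lemma~\ref{l.cF_N_properties} with Gaussian concentration of $\tilde\cF_N$ around $\cF_N$ on a small ball around the origin, in the same manner that yields~\eqref{e.concentration} inside the proof of Lemma~\ref{l.FN_cvg_f}. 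The only adjustment is that the $C|y|^2$ bound---originally coming from the smoothness of the test function $\phi$---is replaced by the nonnegative remainder $R_N(y) := \cF_N(0,y) - \cF_N(0,0) - y\cdot\nabla\cF_N(0,0)$, which by Lemmas~\ref{l.limcF_N(0,x)=} and~\ref{l.sP} converges uniformly on small balls to $R(y) := \sP^h(y)-\sP^h(0)-y\cdot\nabla\sP^h(0) = o(|y|)$ as $y\to 0$; optimizing the test scale still yields the $L^1$ vanishing of the gradient difference.

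The thermal term will be the main obstacle. Since $|m_N|$ is uniformly bounded, it suffices to show $\E\la y\cdot(m_N - \la m_N\ra) > \eps\ra \to 0$ for every unit $y\in\R^d$ and every $\eps > 0$. I would introduce the disorder-dependent cumulant generating function
\[
\psi_{N,y}(\lambda) := \frac{1}{N}\log\la e^{N\lambda y\cdot m_N}\ra = \tilde\cF_N(0,\lambda y) - \tilde\cF_N(0,0),
\]
which is convex in $\lambda$ with $\psi_{N,y}(0)=0$ and $\psi_{N,y}'(0)=y\cdot\la m_N\ra$. Exponential Chebyshev then gives
\[
\la y\cdot(m_N - \la m_N\ra) > \eps\ra \leq \exp\bigl(N[\psi_{N,y}(\lambda) - \lambda\psi_{N,y}'(0) - \lambda\eps]\bigr).
\]
The bracketed quantity is nonnegative by convexity, and by Lemma~\ref{l.limcF_N(0,x)=} together with~\eqref{e.cvgEnablaF_N}, its disorder expectation converges to $R(\lambda y)$, which is $o(\lambda)$ as $\lambda\to 0$ by the $C^1$ regularity of $\sP^h$. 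Combining this with Gaussian concentration of $\tilde\cF_N$ and the $L^1$ convergence of $\la m_N\ra$ from the first paragraph gives $\psi_{N,y}(\lambda) - \lambda\psi_{N,y}'(0) \to R(\lambda y)$ in probability.

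To conclude, I would fix $\eps > 0$, choose $\lambda > 0$ small enough that $R(\lambda y) < \lambda\eps/4$, and exploit that the left-hand side of the Chebyshev estimate is bounded by $1$ pointwise in the disorder. Splitting the disorder space according to whether $\psi_{N,y}(\lambda) - \lambda\psi_{N,y}'(0) \leq \lambda\eps/2$ or not yields
\[
\E\la y\cdot(m_N - \la m_N\ra) > \eps\ra \leq e^{-N\lambda\eps/2} + \P\bigl(\psi_{N,y}(\lambda) - \lambda\psi_{N,y}'(0) > \lambda\eps/2\bigr),
\]
the first term decaying exponentially in $N$ and the second vanishing by the convergence in probability just established. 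A symmetric treatment of the lower tail and a sum over coordinate directions then give $\E\la|m_N - \la m_N\ra|\ra\to 0$, completing the proof.
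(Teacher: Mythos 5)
Your proposal is correct, and it uses the same basic decomposition as the paper: a thermal part $\E\la |m_N-\la m_N\ra |\ra$ and a disorder part $\E|\la m_N\ra -\E\la m_N\ra |$. Your handling of the disorder part is essentially the paper's Step~2 — convexity of $\tilde\cF_N$, Gaussian concentration of $\tilde\cF_N$ around $\cF_N$, and differentiability of $\sP^h$ at $0$ — merely packaged through the ball-supremum estimate of \eqref{e.concentration} with the $C|y|^2$ bound replaced by the convexity defect $R_N(y)$, whereas the paper uses one-dimensional difference quotients with the error $\delta_N(r)/r$; the mechanism is identical. The thermal part is where you genuinely diverge: the paper runs a replica computation, tilting the measure by $s\,y\cdot m_N$, integrating by parts in $s$, and bounding the resulting thermal variances by $\int_0^r \frac{\d^2}{\d s^2}\cF_N(0,sy)\,\d s$, which is then controlled via convexity and the differentiability of $\sP^h$; you instead use an exponential Chebyshev bound inside the random Gibbs measure with random rate $X_N=\psi_{N,y}(\lambda)-\lambda\psi_{N,y}'(0)$ and split the disorder space according to whether $X_N\le \lambda\eps/2$. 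Your route closes correctly: $\E X_N=R_N(\lambda y)\to R(\lambda y)=o(\lambda)$, and $X_N\to R(\lambda y)$ in probability because the differences $\tilde\cF_N-\cF_N$ concentrate and $\la m_N\ra-\E\la m_N\ra\to 0$ in $L^1$ by your first step together with \eqref{e.cvgEnablaF_N}; note that this convergence in probability is genuinely needed, since a plain Markov bound on $X_N$ would only give $\P(X_N>\lambda\eps/2)$ bounded by a constant. Structurally, your thermal step thus uses the disorder step as input, while the paper's two steps are independent; in exchange, your argument yields exponential-in-$N$ thermal tail bounds on the good disorder event, and the paper's variance argument is somewhat shorter and self-contained. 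One cosmetic slip: the bracketed quantity in your Chebyshev display is $X_N-\lambda\eps$, and only $X_N$ itself is nonnegative by convexity.
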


\begin{proof}
For $x\in\R^d$, we write $\cF_N(x)=\cF_N(0,x)$ (in~\eqref{e.cF_N=}) for brevity. Let $\la\cdot\ra_x$ be the Gibbs measure associated with $\cF_N(x)$. Since $\cF_N(0) = F^\soc_N$, we have $\la\cdot\ra=\la\cdot\ra_0$.
Fix any $y\in \R^d$ and set $g(\sigma) = Ny\cdot m_N = y\cdot\sum_{i=1}^Nh(\sigma_i)$. It suffices to show
\begin{align}\label{e.E<|g-E<g>|>=0}
    \lim_{N\to\infty} \frac{1}{N}\E \la \Ll|g(\sigma) - \E \la g(\sigma)\ra_0\Rr|  \ra_0 =0.
\end{align}

\textit{Step~1.}
We show
\begin{align}\label{e.E<|g-<g>|>=0}
    \lim_{N\to\infty} \frac{1}{N}\E \la \Ll|g(\sigma) - \la g(\sigma)\ra_0\Rr|  \ra_0 =0.
\end{align} 
We denote by $(\sigma^l)_{l\in\N}$ independent copies of $\sigma$ under the relevant Gibbs measure. Let $r>0$. Integrating by parts, we get
\begin{align*}
    &r\E \la \Ll|g\Ll(\sigma^1\Rr) -g\Ll(\sigma^2\Rr)\Rr|\ra_0 
    \\
    &= \int_0^r \E \la \Ll|g\Ll(\sigma^1\Rr) -g\Ll(\sigma^2\Rr)\Rr|\ra_{sy} \d s
    -\int_0^r \int_0^t \frac{\d}{\d s} \E \la \Ll|g\Ll(\sigma^1\Rr) -g\Ll(\sigma^2\Rr)\Rr|\ra_{sy}\d s \d t.
\end{align*}
The integrand in the last term can be estimated as follows 
\begin{align*}
    \frac{\d}{\d s} \E \la \Ll|g\Ll(\sigma^1\Rr) -g\Ll(\sigma^2\Rr)\Rr|\ra_{sy}  =  \E \la \Ll|g\Ll(\sigma^1\Rr) -g\Ll(\sigma^2\Rr)\Rr|\Ll(g\Ll(\sigma^1\Rr)+g\Ll(\sigma^2\Rr)-2g\Ll(\sigma^3\Rr)\Rr)\ra_{sy}
    \\
    \geq -2 \E \la \Ll|g\Ll(\sigma^1\Rr) -g\Ll(\sigma^2\Rr)\Rr|^2\ra_{sy}\geq -8 \E \la \Ll|g\Ll(\sigma\Rr) -\la g\Ll(\sigma\Rr)\ra_{sy}\Rr|^2\ra_{sy}.
\end{align*}
Inserting this into the previous display, we obtain
\begin{align*}
    &\E \la \Ll|g\Ll(\sigma^1\Rr) -g\Ll(\sigma^2\Rr)\Rr|\ra_0 
    \\
    &\leq \frac{1}{r}\int_0^r\E \la \Ll|g\Ll(\sigma^1\Rr) -g\Ll(\sigma^2\Rr)\Rr|\ra_{sy} \d s
    + \frac{8}{r}\int_0^r \int_0^t \E \la \Ll|g\Ll(\sigma\Rr) -\la g\Ll(\sigma\Rr)\ra_{sy}\Rr|^2\ra_{sy}\d s \d t
    \\
    &\leq \frac{2}{r}\int_0^r\E \la \Ll|g\Ll(\sigma\Rr) -\la g\Ll(\sigma\Rr)\ra_{sy}\Rr|\ra_{sy} \d s
    + 8\int_0^r \E \la \Ll|g\Ll(\sigma\Rr) -\la g\Ll(\sigma\Rr)\ra_{sy}\Rr|^2\ra_{sy}\d s .
\end{align*}
Setting $\eps_N = \frac{1}{N} \int_0^r \E \la \Ll|g\Ll(\sigma\Rr) -\la g\Ll(\sigma\Rr)\ra_{sy}\Rr|^2\ra_{sy}\d s$, we can rewrite the above estimate as
\begin{align*}
    \frac{1}{N}\E \la \Ll|g\Ll(\sigma^1\Rr) -g\Ll(\sigma^2\Rr)\Rr|\ra_0 \leq 2 \sqrt{\frac{\eps_N}{rN}}+ 8 \eps_N.
\end{align*}
Using~\eqref{e.2nd_der_cF}, we have
\begin{align*}
    \eps_N & = \int_0^r \frac{\d^2}{\d s^2}\cF_N(sy) \d s =  y\cdot\nabla \cF_N(ry) - y \cdot\nabla \cF_N(0) 
    \\
    &\leq \frac{\cF_N((r+t)y) - \cF_N(ry)}{t} - \frac{\cF_N(0) - \cF_N(-t y)}{t}
\end{align*}
for any $t>0$, where the last inequality follows from the convexity of $\cF_N$ given by Lemma~\ref{l.cF_N_properties}.
Combining the above two displays, using Lemma~\ref{l.limcF_N(0,x)=}, and noticing $\sup_N\eps_N<\infty$ (due to~\eqref{e.|m|<}), we obtain
\begin{align*}
    \limsup_{N\to\infty} \frac{1}{8N}\E \la \Ll|g\Ll(\sigma^1\Rr) -g\Ll(\sigma^2\Rr)\Rr|\ra_0 \leq \frac{\sP^h((r+t)y) - \sP^h(ry)}{t} - \frac{\sP^h(0) - \sP^h(-t y)}{t}.
\end{align*}
We first send $r\to0$ and then $t\to 0$. Since $\sP^h$ is differentiable by Lemma~\ref{l.sP}, the right-hand side vanishes, which yields~\eqref{e.E<|g-<g>|>=0}. 

\textit{Step~2.}
We show
\begin{align}\label{e.E|<g>-E<g>|=0}
    \lim_{N\to\infty} \frac{1}{N}\E\Ll| \la g(\sigma)\ra_0 - \E\la g(\sigma)\ra_0\Rr|  =0.
\end{align}
Recall $\tilde \cF_N$ below~\eqref{e.cF_N=} and we write $\tilde \cF_N(\cdot)= \tilde\cF_N(0,\cdot)$ for brevity.
We can use~\eqref{e.1st_der_FN} to rewrite
\begin{align}\label{e.1/NE|g-g|=...}
     \frac{1}{N}\E\Ll| \la g(\sigma)\ra_0 - \E\la g(\sigma)\ra_0\Rr| = \E \Ll| y\cdot\nabla \tilde\cF_N(0) - y\cdot\nabla \cF_N(0)\Rr|.
\end{align}
For $r\in(0,1]$, we define
\begin{align*}
    \delta_N(r) = \Ll|\tilde\cF_N(-ry)-\cF_N(-ry)\Rr| + \Ll|\tilde\cF_N(0)-\cF_N(0)\Rr| + \Ll|\tilde\cF_N(ry)-\cF_N(ry)\Rr|.
\end{align*}
In view of~\eqref{e.2nd_der_FN}, $\tilde\cF_N$ is convex, which implies
that $y\cdot\nabla \tilde\cF_N(0) - y\cdot\nabla \cF_N(0)$ is bounded from above by
\begin{align*}
    \frac{\tilde\cF_N(ry) - \tilde\cF_N(0)}{r} - y\cdot \nabla \cF_N(0) \leq \frac{\cF_N(ry)-\cF_N(0)}{r}-y\cdot \nabla \cF_N(0)+ \frac{\delta_N(r)}{r}
\end{align*}
and from below by
\begin{align*}
    \frac{\tilde\cF_N(0) - \tilde\cF_N(-ry)}{r} - y\cdot \nabla \cF_N(0) \geq \frac{\cF_N(0)-\cF_N(-ry)}{r}-y\cdot \nabla \cF_N(0)- \frac{\delta_N(r)}{r}.
\end{align*}
By the standard concentration result (e.g.\ see \cite[Theorem~1.2]{pan}), there is a constant $C>0$ such that $\sup_{r\in(0,1]}\E \delta_N(r)\leq CN^{-\frac{1}{2}}$.
This along with Lemma~\ref{l.limcF_N(0,x)=} and~\eqref{e.cvgEnablaF_N} gives
\begin{align*}
    &\limsup_{N\to\infty} \E \Ll|y\cdot\nabla \tilde\cF_N(0) - y\cdot\nabla \cF_N(0)\Rr| 
    \\
    &\leq \Ll|\frac{\sP^h(ry)-\sP^h(0)}{r}-y\cdot \nabla \sP^h(0)\Rr| 
    + \Ll|\frac{\sP^h(0)-\sP^h(-ry)}{r}-y\cdot \nabla \sP^h(0)\Rr|.
\end{align*}
Sending $r\to\infty$, using the differentiability of $\sP^h$, and inserting this to \eqref{e.1/NE|g-g|=...}, we get~\eqref{e.E|<g>-E<g>|=0}.

In conclusion, \eqref{e.E<|g-E<g>|>=0} follows from \eqref{e.E<|g-<g>|>=0} and~\eqref{e.E|<g>-E<g>|=0} and thus the proof is complete.
\end{proof}

\noindent
\textbf{Data availability.}
No datasets were generated during this work.

\noindent
\textbf{Conflict of interest.}
The author has no conflicts of interest to declare.

\small
\bibliographystyle{abbrv}
\newcommand{\noop}[1]{} \def\cprime{$'$}

\end{document}